\documentclass[12pt,a4paper,english]{amsart}
\usepackage[utf8]{inputenc}
\usepackage[english]{babel} 

\usepackage{amsmath} 
\usepackage{amsthm} 
\usepackage{graphicx} 
\usepackage{xcolor} 
\usepackage{amsfonts}
\usepackage[biblabel]{cite}
\usepackage{bm} 
\usepackage[hidelinks]{hyperref} 
\usepackage{amssymb} 
\usepackage{amscd}
\usepackage{etoolbox}
\patchcmd{\thmhead}{(#3)}{#3}{}{}
\usepackage{enumitem}
\usepackage{breqn} 
\usepackage{faktor} 
\usepackage{xfrac} 
\usepackage{standalone}
\usepackage{tikz}

\usepackage{mathtools} 
\DeclarePairedDelimiter\abs{\lvert}{\rvert}%
\DeclarePairedDelimiter\norm{\lVert}{\rVert}%
\makeatletter
\let\oldabs\abs
\def\abs{\@ifstar{\oldabs}{\oldabs*}}
\let\oldnorm\norm
\def\norm{\@ifstar{\oldnorm}{\oldnorm*}}
\makeatother

\newtheorem{theorem}{Theorem}[section]
\newtheorem{proposition}[theorem]{Proposition}
\newtheorem{corollary}[theorem]{Corollary}
\newtheorem{lemma}[theorem]{Lemma}
\theoremstyle{definition}
\newtheorem{definition}[theorem]{Definition} 
\newtheorem{remark}[theorem]{Remark} 
\newtheorem{example}[theorem]{Example}

\DeclareMathOperator{\FB}{FB}
\DeclareMathOperator{\AG}{AG}
\DeclareMathOperator{\FR}{FR}

\DeclareMathOperator{\wt}{{wt}}
\DeclareMathOperator{\ev}{{ev}}
\DeclareMathOperator{\ini}{{in}}
\DeclareMathOperator{\supp}{{supp}}

\newcommand{\vladut}{Vl\u{a}du\c{t}}
\newcommand{\F}{{\mathbb{F}}}
\newcommand{\fq}{\mathbb{F}_q}
\newcommand{\NN}{{\mathbb{N}}}

\newcommand{\M}{{\mathbb{M}}}
\newcommand{\X}{{\mathbb{X}}}

\newcommand{\bqm}{{[0,q-1]^m}}

\newcommand{\set}[1]{\left\{#1\right\}}
\newcommand{\size}[1]{\left|#1\right|}
\newcommand{\mon}{\text{Mon}}

\title[Wei-type duality and asymptotics of the footprint bound]{Wei-type duality and asymptotics of the footprint bound}
\author{Rodrigo San-José}
\curraddr{
\texttt{Department of Mathematics\\ Virginia Tech\\ Blacksburg, VA USA}
}
\email{rsanjose@vt.edu}

\thanks{The author was partially supported by the NSF grant DMS-2401558, the Commonwealth Cyber Initiative, an AMS-Simons Travel Grant, and by Grant PID2022-138906NB-C21 funded by MICIU/AEI/10.13039/501100011033 and by ERDF/EU}

\subjclass[2020]{Primary: 94B05. Secondary: 11T71, 14G50}

\keywords{Footprint bound, evaluation codes, generalized Hamming weights}

\begin{document}

\begin{abstract}
We obtain a Wei-type duality between the footprint bound and the dual footprint bound for the generalized Hamming weights of an evaluation code. This duality applies between the Andersen-Geil and Feng-Rao bounds as well. We also prove that the footprint and dual footprint bounds cannot be used to guarantee the asymptotic goodness of a family of evaluation codes. 
\end{abstract}
\maketitle

\section{Introduction}
The footprint bound is a classical result in algebraic geometry, bounding the number of rational points defined by an ideal in terms of its footprint \cite[Prop. 7, Chapter 5 \textsection 3]{cox}. This idea has proven fruitful for bounding the minimum distance of linear codes \cite{geilbezout,pitones_minimum_distance_functions,sanjose_conjecture}, since it is closely related to the number of zeroes of polynomials over a finite field. The generalized Hamming weights (GHWs) of a linear code, introduced in \cite{weiGHW}, provide an extension of the minimum distance, which has found several applications \cite{guruswammiGHWlistdecoding,guruswammiGHWlistdecodingTensorInterleaved}. Moreover, it is related to the number of common zeroes of sets of polynomials and can therefore also be studied using the footprint bound. Among other properties, the GHWs of a linear code satisfy the so-called Wei duality, which implies that the GHWs of a linear code are determined by those of its dual, and vice versa. A further extension of this concept is given by the relative generalized Hamming weights (RGHWs) of a pair of linear codes \cite{luoPropertiesRGHWs}, which have applications in secret sharing \cite{matsumotoRGHW} and quantum error-correction \cite{kkks,hamadaSteaneEnlargement}. There are two related bounds, the Feng-Rao bound \cite{fengraoMajority}, and the Andersen-Geil bound (sometimes called the Feng-Rao bound for primary codes) \cite{andersen_geil_bound}, which generalize the footprint bound in the affine setting and can also be used to bound the RGHWs of linear codes \cite{geil_rghws_one_point_ag_codes}. The computation of the GHWs and RGHWs of linear codes is, in general, NP-hard \cite{vardyIntractability}, and the footprint and Feng-Rao-type bounds have been used to obtain them for many of the most well-known families of codes \cite{pellikaanGHWRM,beelenGHWcartesian,dattaRGHWcartesian,munueraGHWhermitica,sanjoseGHWNT,sanjose_squarefree,eduardoGHWHyperbolic,geil_rghws_rm,geil_rghws_one_point_ag_codes}.

Finding constructions for asymptotically good families of codes has proven a challenging problem. In fact, many of the most well-known algebraic families of codes are known to be asymptotically bad, e.g., Reed-Muller codes, Cartesian codes, hyperbolic codes \cite{olav_codes_from_order_domains}, binary primitive narrow-sense BCH codes \cite{lin_bch_bad}, and several classes of cyclic codes \cite{berman_cyclic_bad,conchita_cyclic_bad}. One of the most important examples of asymptotically good families of codes is given by AG codes \cite{tsfasman_asymptotically_good_ag_codes,garcia_stichtenoth_tower}. The authors consider the Goppa bound to prove that the corresponding codes are asymptotically good. Since the Feng-Rao and Andersen-Geil bounds are stronger, these bounds can be used to certify asymptotic goodness, in the sense that one can prove that certain families of codes are asymptotically good using them instead of the actual minimum distance of the code. 

In this paper, we study the footprint bound and the dual footprint bound from a combinatorial perspective. In particular, in Section \ref{s:wei_duality} we prove that there is a Wei-type duality between the values of the footprint bound and the dual footprint bound, and, more generally, between the values of the Andersen-Geil bound and the Feng-Rao bound. This implies that the footprint bound is sharp if and only if the dual footprint bound is sharp. Moreover, we show that the dual footprint bound coincides with the footprint bound of the dual code whenever the dual code is also obtained as a monomial code via an order-reversing bijection. This covers several well-known families of codes, such as Cartesian codes \cite{lopez_affine_cartesian} and decreasing norm-trace codes \cite{decreasingnormtrace}. These results also apply analogously to the Andersen-Geil and Feng-Rao bounds. 

In Section \ref{s:asymptotic}, we prove that the footprint bound cannot be used to guarantee that a family of codes is asymptotically good, i.e., if we have a family of codes with non-vanishing asymptotic rate, the quotient of the footprint bound of the corresponding codes, divided by the length, will vanish. Furthermore, we show a similar result for the footprint bound for GHWs and RGHWs, and the result also holds for the dual footprint bound. As a consequence, any family of evaluation codes for which the footprint bound is sharp is asymptotically bad, recovering the well-known results for Reed-Muller codes, Cartesian codes, or hyperbolic codes \cite{olav_codes_from_order_domains}. 

\section{Preliminaries}\label{s:preliminaries}
Let $\F_q$ be a finite field with $q$ elements. A linear code $C$ is a linear subspace of $\fq^n$. Its dual code is the orthogonal space with respect to the Euclidean inner product, and is denoted by $C^\perp$. Given a vector $c\in \fq^n$, we denote its Hamming weight by $\wt(c):=\size{\set{i:c_i\neq 0 }}$, which is the number of nonzero coordinates of $c$. We say that $C$ is an $[n,k,d]$ code if $C\subset \fq^n$, $\dim C=k$, and $d:=\min \set{\wt(c):c\in C\setminus \set{0}}$. The parameters $n, k, d$ are called the length, dimension, and minimum distance, respectively. Given a subcode $D\subset C$, that is, a linear subspace of $C$, we denote
$$
\supp(D):=\set{1\leq i \leq n:\exists c\in D \text{ with } c_i\neq 0}. 
$$
\begin{definition}
Let $1\leq r \leq k=\dim C$. The $r$-th generalized Hamming weight (GHW) of $C$ is a generalization of the minimum distance introduced in \cite{weiGHW}, and is defined by
$$
d_r(C):=\min \left\{ \abs{ \supp(D) }: D \textnormal{ is a subcode of } C \textnormal{ of dimension } r \right\}. 
$$
\end{definition}
It follows from the definition that $d_1(C)$ is the minimum distance of $C$. From \cite{weiGHW} we have the following general properties of the GHWs of a code.

\begin{theorem}[(Monotonicity)]\label{t:monotonia}
For an $[n,k]$ linear code $C$ with $k>0$ we have
$$
1\leq d_1(C)<d_2(C)<\cdots <d_k(C)\leq n.
$$
\end{theorem}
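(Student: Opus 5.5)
The argument splits into the two endpoint bounds and the chain of strict inequalities, and it works directly from the definition of $d_r(C)$.

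\emph{Endpoint bounds.} Any subcode $D$ of dimension $r\geq 1$ contains a nonzero vector, so $\supp(D)\neq\emptyset$ and $d_1(C)\geq 1$. Every support is a subset of $\set{1,\dots,n}$, so $d_k(C)\leq n$.

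\emph{Strict monotonicity.} The core step is to show $d_r(C)<d_{r+1}(C)$ for $1\leq r<k$. Let $D$ be a subcode of dimension $r+1$ with $\abs{\supp(D)}=d_{r+1}(C)$. I will construct an $r$-dimensional subcode $D'\subset D$ whose support is strictly smaller than $\supp(D)$.

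\begin{enumerate}[label=(\arabic*)]
\item Pick any $i\in\supp(D)$; this set is nonempty because $D\neq 0$.
\item Consider the coordinate functional $\pi_i\colon D\to\fq$, $c\mapsto c_i$. It is nonzero on $D$ because $i\in\supp(D)$.
\item Hence $D':=\ker\pi_i$ is a subspace of $D$ of dimension exactly $r$.
\item Every element of $D'$ vanishes at $i$, so $\supp(D')\subseteq\supp(D)\setminus\set{i}$. Note that $\supp(D')\subseteq\supp(D)$ holds simply because $D'\subseteq D$.
\end{enumerate}

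Therefore
$$
d_r(C)\leq\abs{\supp(D')}\leq d_{r+1}(C)-1,
$$
which is the desired strict inequality.

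Chaining these inequalities for $r=1,\dots,k-1$ and combining them with the endpoint bounds gives the statement. There is no real obstacle in this argument. The only point to check is that $\pi_i$ is not identically zero on $D$, so that its kernel has dimension exactly $r$ rather than $r+1$. This is exactly why $i$ must be chosen inside $\supp(D)$.
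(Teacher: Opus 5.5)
Your proof is correct. The paper does not prove this statement itself; it quotes it from \cite{weiGHW}, and your argument is exactly Wei's original proof: take an $(r+1)$-dimensional subcode $D$ of minimum support, then pass to the $r$-dimensional kernel of a coordinate functional $c\mapsto c_i$ with $i\in\supp(D)$. The paper's remark after Corollary~\ref{c:monotonicity_fb} uses the same idea for the footprint bound, deleting a minimal element from an optimal set.
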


\begin{theorem}[(Duality)]\label{t:ghwdual}
Let $C$ be an $[n,k]$ code. Then
$$
\{d_r(C):1\leq r\leq k\}\sqcup \{n+1-d_r(C^\perp):1\leq r\leq n-k\}=\{1,2,\dots,n\}.
$$
\end{theorem}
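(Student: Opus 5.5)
The plan is to reduce the statement to a counting argument. Both sets have at most $k$ and $n-k$ elements respectively, and by Theorem~\ref{t:monotonia} all values lie in $\{1,\dots,n\}$. Monotonicity also makes the $d_r(C)$ pairwise distinct, and likewise the $n+1-d_r(C^\perp)$. It therefore suffices to show that no value $d_r(C)$ equals any value $n+1-d_s(C^\perp)$. Then the two sets are disjoint, their union has exactly $k+(n-k)=n$ elements, and the union equals $\{1,\dots,n\}$.

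The key tool will be a description of GHWs through shortened codes. For a set $I\subseteq\{1,\dots,n\}$, write $C(I)=\{c\in C: c_i=0 \text{ for } i\notin I\}$. By definition, $d_r(C)=\min\{|I|:\dim C(I)\geq r\}$. The dimension of $C(I)$ can be computed by linear algebra. $C(I)$ is the kernel of the projection of $C$ onto the coordinates outside $I$. That projection has image of dimension $n-|I|-\dim C^\perp(\overline{I})$, where $\overline{I}$ is the complement of $I$. This holds because the dual of the punctured code on $\overline{I}$ is $C^\perp$ shortened to $\overline{I}$. Hence
\[
\dim C(I)=k-(n-|I|)+\dim C^\perp(\overline{I}).
\]
This identity is the main step. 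I will verify it carefully by identifying the orthogonal complement of the projection of $C$ with the vectors of $C^\perp$ supported on $\overline{I}$. This is the only place where actual linear algebra enters, and I expect it to be the main obstacle.

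Now fix $t=n+1-d_s(C^\perp)$ and suppose, for contradiction, that $t=d_r(C)$. Take $J$ with $|J|=d_s(C^\perp)=n+1-t$ and $\dim C^\perp(J)\geq s$. Also, $d_s(C^\perp)-1$ is less than $d_s(C^\perp)$, so every set $J'$ of size $n-t$ has $\dim C^\perp(J')\leq s-1$. Apply the identity with $\overline{I}=J$, so that $|I|=t-1$. This gives $\dim C(I)\geq k-n+t-1+s$.

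Next, for $I'$ of size $t$ with $\dim C(I')\geq r$, the identity with $|\overline{I'}|=n-t$ gives $r\leq \dim C(I')\leq k-n+t+s-1$. Minimality of $d_r(C)=t$ gives $\dim C(I)\leq r-1$ for all $|I|=t-1$. So $k-n+t-1+s\leq r-1$, that is, $r\geq k-n+t+s$. This contradicts $r\leq k-n+t+s-1$. Hence the two sets are disjoint, and the counting argument of the first paragraph completes the proof.
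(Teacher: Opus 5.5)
Your proof is correct. The identity $\dim C(I)=k-(n-\size{I})+\dim C^\perp(\overline{I})$ holds for the reason you give: inside $\fq^{\overline{I}}$, the orthogonal complement of the projection of $C$ onto the coordinates in $\overline{I}$ is the set of restrictions of codewords of $C^\perp$ supported in $\overline{I}$. The inequalities $r\leq k-n+t+s-1$ and $r\geq k-n+t+s$ then follow from this identity together with the minimality of $d_s(C^\perp)$ and of $d_r(C)$. The boundary case $t=1$, where $I=\emptyset$, is also covered.

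The paper gives no proof of this theorem; it quotes it from Wei. The natural comparison is therefore with the paper's proof of Theorem \ref{t:duality_footprint}, which runs the same mechanism in the combinatorial setting. Transplanted to codes, that proof would set $\Gamma_C(w):=\max\set{\dim C(I):\size{I}=w}$, and proceed as follows.

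Your pointwise identity and the bijection $I\mapsto\overline{I}$ give $\Gamma_C(w)=k-n+w+\Gamma_{C^\perp}(n-w)$. We have $\Gamma_C(0)=0$ and $\Gamma_C(n)=k$, and $\Gamma_C$ is non-decreasing with increments in $\set{0,1}$, since deleting one coordinate of $I$ lowers $\dim C(I)$ by at most one. Since also $d_r(C)=\min\set{w:\Gamma_C(w)\geq r}$, the values $d_r(C)$ are exactly the jump points of $\Gamma_C$. Differencing the identity shows that $w$ is a jump of $\Gamma_C$ if and only if $n+1-w$ is not a jump of $\Gamma_{C^\perp}$, which is the claimed partition.

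That version is more self-contained. It yields the disjoint union in one stroke and produces the strict monotonicity of Theorem \ref{t:monotonia} as a by-product, just as Corollary \ref{c:monotonicity_fb} is deduced from Theorem \ref{t:duality_footprint}. Your argument, the classical Wei-style proof, instead uses Theorem \ref{t:monotonia} as an input to the counting step and proves disjointness by a contradiction at a single value $t$. Both approaches are valid.
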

The relative generalized Hamming weights (RGHWs) were introduced in \cite{luoPropertiesRGHWs}, and they extend the notion of GHWs. 

\begin{definition} \label{d:rghw}
Let $C_2\subset C_1 \subset \fq^n$ be two linear codes, and $k_1=\dim C_1$, $k_2=\dim C_2$. Let $r$ with $1\leq r \leq k_1-k_2$. The $r$-th relative generalized Hamming weight of $C_1$ and $C_2$, denoted by $M_r(C_1,C_2)$, is
$$
M_r(C_1,C_2)=\min \{ \abs{\supp(D)} : D \text{ is a subcode of $C_1$ with } \dim D=r, \; D\cap C_2=\{0\}\}.
$$
\end{definition}
For $1\leq r \leq k_1-k_2$, we have
$$
M_r(C_1,C_2)\geq d_r(C_1).
$$
If $C_2=\set{0}$, we recover the usual GHWs. The RGHWs are also strictly increasing \cite{luoPropertiesRGHWs}, although we do not have a duality result as in Theorem \ref{t:ghwdual} (see \cite[Ex. 2.10]{sanjoseGHWsPackage}). 

\begin{theorem}\label{t:monotoniarghw}
Let $C_2\subset C_1\subset \fq^n$ be linear codes with $\dim C_1=k_1$, $\dim C_2=k_2$. Then
$$
1\leq M_1(C_1,C_2)< M_2(C_1,C_2)<\cdots < M_{k_1-k_2}(C_1,C_2)\leq n.
$$
\end{theorem}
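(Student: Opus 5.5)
To prove Theorem \ref{t:monotoniarghw}, the bounds $1\leq M_1(C_1,C_2)$ and $M_{k_1-k_2}(C_1,C_2)\leq n$ are immediate. A subcode $D$ with $\dim D\geq 1$ and $D\cap C_2=\set{0}$ contains a nonzero vector, so $\abs{\supp(D)}\geq 1$. Every support is a subset of $\set{1,\dots,n}$, so $\abs{\supp(D)}\leq n$. Note also that each $M_r(C_1,C_2)$ is well defined for $1\leq r\leq k_1-k_2$: taking any complement $W$ of $C_2$ in $C_1$, which has dimension $k_1-k_2$, any $r$-dimensional subspace of $W$ meets $C_2$ trivially.

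The core of the argument is strict monotonicity: we show that $M_r(C_1,C_2)<M_{r+1}(C_1,C_2)$ for $1\leq r<k_1-k_2$. Let $D$ be a subcode of $C_1$ of dimension $r+1$ with $D\cap C_2=\set{0}$ and $\abs{\supp(D)}=M_{r+1}(C_1,C_2)$. Choose a coordinate $i\in\supp(D)$, and let
$$
D'=\set{c\in D: c_i=0}.
$$
This is the kernel of the nonzero linear functional $c\mapsto c_i$ on $D$, so $\dim D'=r$. Since $D'\subset D$, we also have $D'\cap C_2\subset D\cap C_2=\set{0}$, so $D'$ is admissible in the definition of $M_r(C_1,C_2)$. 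Moreover, $\supp(D')\subset\supp(D)\setminus\set{i}$. Therefore
$$
M_r(C_1,C_2)\leq \abs{\supp(D')}\leq M_{r+1}(C_1,C_2)-1.
$$
Chaining these inequalities over $r$ gives the full statement.

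There is no real obstacle here. The only point needing care is checking that the shortened subcode $D'$ still intersects $C_2$ trivially, and that holds because $D'$ is contained in $D$. This is the same argument as in Wei's proof of Theorem \ref{t:monotonia}, which is recovered by taking $C_2=\set{0}$.
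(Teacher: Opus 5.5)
Your proof is correct: since $i\in\supp(D)$, the shortened subcode $D'=\set{c\in D: c_i=0}$ has dimension $r$, still meets $C_2$ trivially because $D'\subset D$, and its support misses $i$, which gives $M_r(C_1,C_2)\leq M_{r+1}(C_1,C_2)-1$. The paper gives no proof of this statement (it is quoted from \cite{luoPropertiesRGHWs}), and your argument is Wei's shortening proof of Theorem \ref{t:monotonia} carried over to the relative setting, which is the standard way to establish it.
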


While it is possible to derive the GHWs and RGHWs of some linear codes directly from the definitions, e.g., see \cite{weiGHW} for binary Reed-Muller codes or \cite{sanjoseGHWMPC} for matrix-product codes, we usually require additional structure to compute them. In particular, in this work we consider evaluation codes. Let $\X=\set{P_1,\dots,P_n}\subset \fq^m$ be a set of $n$ points. We denote its vanishing ideal by $I(\X)$, which is the ideal generated by the polynomials in $\fq[x_1,\dots,x_m]$ that vanish at all the points of $\X$. We define the \textit{evaluation map} by 
$$
\begin{array}{lccc}
{\rm ev_\X}\colon &\fq[x_1,\dots,x_m] &\rightarrow& \fq^{n}\quad \\
&f & \mapsto& \left(f(P_1),\ldots,f(P_n)\right).
\end{array}
$$
We may denote it by $\ev$ if there is no confusion about the evaluation points. We denote by $\mon$ the set of monomials of $\fq[x_1,\dots,x_m]$. Given a monomial order and any ideal $I\subset \fq[x_1,\dots,x_m]$, we consider $\ini(I)$, the initial ideal of $I$ with respect to that order. The \textit{footprint} of $I$ is 
$$
\Delta(I):=\set{x^\alpha \in \mon : x^\alpha \not \in \ini(I)}. 
$$
The set $\Delta(I)$ is a basis for $\fq[x_1,\dots,x_m]/I$ \cite[Thm. 15.3]{eisenbud}. Consider $\mathbb{L}\subset \text{Span}_\fq(\Delta(I(\X)))$. Then the \textit{evaluation code} associated to $\X$ and $\mathbb{L}$ is $\ev(\mathbb{L})$. Let $\M\subset \Delta(I(\X))$. If $\mathbb{L}=\text{Span}_\fq(\M)$, we say that $\ev(\mathbb{L})$ is a \textit{monomial code}, and we denote it by $C(\X,\M)$. 

Let $V(I)$ be the affine variety associated to $I$, i.e., the set of all common zeroes of the polynomials in $I$. The \textit{footprint bound} states that
\begin{equation}\label{eq:footprint_bound}
\size{V(I)}\leq \size{\Delta(I)},
\end{equation}
with equality if $I$ is radical (see \cite[Thm. 6 and Prop. 7, Chapter 5 \textsection 3]{cox}). Let $f\in \mathbb{L}$. Using \eqref{eq:footprint_bound}, we get
\begin{equation}\label{eq:proof_footprint}
\size{V_\X(f)}:=\abs{V(f)\cap \X}= \abs{\Delta(I(\X)+(f))}\leq \size{\Delta(I(\X))\cap \Delta(f)}=\size{\Delta(I(\X))\cap \Delta(\ini(f))}. 
\end{equation}
This can be directly used to lower-bound the minimum distance of evaluation codes:
$$
d_1(\ev(\mathbb{L}))\geq n-\max\set{\size{\Delta(I(\X))\cap \Delta(x^\alpha)}:x^\alpha \in \ini(\mathbb{L})},
$$
where $\ini(\mathbb{L}):=\set{\ini(f):f\in \mathbb{L}}$. If we consider instead a linearly independent set $F=\set{f_1,\dots,f_r}\subset \mathbb{L}$, and $V_\X(F)=V(F)\cap \X$, the same reasoning as \eqref{eq:proof_footprint} gives a bound for the GHWs: 
\begin{equation}\label{eq:ghw_footprint_bound}
\begin{aligned}
d_r(\ev(\mathbb{L}))&\geq n-\max\set{\size{\Delta(I(\X))\cap \Delta(\mathbb{S})}:\mathbb{S}\in \binom{\ini(\mathbb{L})}{r}}\\
&\geq\min\set{\size{\Delta(I(\X))\setminus \Delta(\mathbb{S})}:\mathbb{S}\in \binom{\ini(\mathbb{L})}{r}},
\end{aligned}
\end{equation}
where we have used the fact that $\size{\Delta(I(\X))}=n$, which follows from \eqref{eq:footprint_bound}, and the notation $\binom{A}{r}$ for the subsets of $A$ with size $r$. This is what we call \textit{footprint bound} in the context of coding theory. We can interpret \eqref{eq:ghw_footprint_bound} combinatorially. For a set of monomials $A\subset \fq[x_1,\dots,x_m]$, we consider $\varphi(A):=\set{\alpha : x^\alpha \in A}$. Then we denote $X:=\varphi( \Delta(I(\X)))$ and $L:=\varphi(\ini(\mathbb{L}))$. For $\beta \in \mathbb{N}^m$, we denote $\nabla_X(\beta):=\set{\alpha \in X:\beta \preceq \alpha }$, where $\prec$ denotes the usual partial order in $\mathbb{N}^m$. The set $\nabla_X(\beta)$ is called the (upward) \textit{shadow} of $\beta$ (with respect to $X$). For a subset $S\subset \mathbb{N}^m$, we denote $\nabla_X(S):=\bigcup_{s\in S}\nabla_X(s)$. Then \eqref{eq:ghw_footprint_bound} is translated to
\begin{equation}\label{eq:footprint_bound_combinatorial}
d_r(\ev(\mathbb{L}))\geq \FB_X^r(L):=\min\set{ \size{\nabla_X(S)}:S\in \binom{L}{r}}. 
\end{equation}

\begin{example}
The bound from \eqref{eq:footprint_bound_combinatorial} is sharp for Reed-Muller codes \cite{pellikaanGHWRM}, Cartesian codes \cite{beelenGHWcartesian}, hyperbolic codes \cite{eduardoGHWHyperbolic}, codes over simplices \cite{sanjose_simplex}, and square-free Cartesian codes \cite{sanjose_squarefree}. It is also known not to be sharp for some AG codes; see, e.g., \cite{sanjoseGHWNT}.
\end{example}

In \cite{olav_evaluation_codes_affine_variety}, a similar bound for the GHWs of $\ev(\mathbb{L})^\perp$ is derived:
\begin{equation}\label{eq:footprint_bound_combinatorial_dual}
d_r(\ev(\mathbb{L})^\perp)\geq  \FB_X^{r,\perp}(L):=\min\set{ \size{\nabla^\perp_X(S)}:S\in \binom{X\setminus L}{r}},
\end{equation}
where $\nabla^\perp_X(\beta):=\set{\alpha \in X:\alpha \preceq \beta }$ and $\nabla^\perp_X(S):=\bigcup_{s\in S}\nabla_X^\perp(s)$. We call this bound \textit{dual footprint bound}. 

\begin{remark}\label{r:evalcodes_to_monomial}
Note that the bounds \eqref{eq:footprint_bound_combinatorial} and \eqref{eq:footprint_bound_combinatorial_dual} for $\ev(\mathbb{L})$ are the same as those we would obtain for $C(\X,\M)$, with $\M=\ini(\mathbb{L})$. Thus, we can restrict ourselves to monomial codes $C(\X,\M)$. 
\end{remark}

One can generalize these techniques to bound the RGHWs of evaluation codes \cite{hiramRGHW}. Let $\mathbb{L}_2\subset\mathbb{L}_1  \subset \text{Span}_\fq(\Delta(I(\X)))$ be two linear spaces, and, for  $1\leq r \leq \dim \mathbb{L}_1-\dim \mathbb{L}_2$, consider $\M_{1,2}^r:=\binom{\ini(\mathbb{L}_1\setminus \mathbb{L}_2)}{r}$. Let $M_{1,2}^r:=\binom{\varphi(\ini(\mathbb{L}_1\setminus \mathbb{L}_2))}{r}$. Then
\begin{equation}\label{eq:fprelative_ugly}
M_r(\ev(\mathbb{L}_1),\ev(\mathbb{L}_2))\geq \min \{\size{\nabla_X(M)} : M \in M_{1,2}^r\}.
\end{equation}
Following Remark \ref{r:evalcodes_to_monomial}, we usually restrict ourselves to monomial codes in this context. Let $\M_2\subset \M_1$, and we denote $M_i=\varphi(\M_i)$, for $1\leq i \leq 2$. Assume that $\alpha \in M_1\setminus M_2$ implies $\alpha \succ\beta$, for any $\beta \in M_2$. Then the previous bound can be rewritten as
\begin{equation}\label{eq:fprelative}
M_r(C(\X,\M_1),C(\X,\M_2))\geq \min \left\{\size{\nabla_X(M)}: M \in \binom{M_1\setminus M_2}{r}\right\}:=\FB_X^r(M_1,M_2).
\end{equation}

Similarly, for dual codes, we get
\begin{equation}\label{eq:fprelative_dual}
M_r(C(\X,\M_2)^\perp,C(\X,\M_1)^\perp)\geq \min \left\{\size{\nabla_X^\perp(M)}: M \in \binom{M_1\setminus M_2}{r}\right\}:=\FB_X^{r,\perp}(M_1,M_2).
\end{equation}

\begin{definition}
We say that a set $A\subset \NN^m$ is \textit{decreasing} (or \textit{downward closed}) if, for every $\alpha \in A$, we have $\set{\beta\in \NN^m :\beta \preceq \alpha}\subset A $. Similarly, we say that a set of monomials $\M\subset \mon $ is decreasing if the set $\varphi(\M)$ is decreasing. Given $U\subset X \subset \NN^m$, we say that it is \textit{increasing} (or \textit{upward closed}) if, for every $\alpha \in U$, we have $\set{\beta\in X :\alpha  \preceq \beta}\subset U $, i.e, if $X\setminus U$ is decreasing. 
\end{definition}

\begin{remark}
By construction, the sets $X$ and $\Delta(I(\X))$ from the previous section are decreasing. If we consider a code $C(\X,\M)$ where $\M$ is not decreasing, we obtain the same bound in \eqref{eq:footprint_bound_combinatorial} as if we considered $\M'$, the smallest decreasing set that contains $\M$, and the code $C(\X,\M')$ has a higher rate.

Similarly, if $x^\alpha\in \M$ and $x^{\alpha'}\not\in \M$, $\alpha'\preceq\alpha$, then $\M'=\M\setminus \{x^\alpha\}$ will give a code $C(\X,\M')^\perp$ with higher rate and same bound in \eqref{eq:footprint_bound_combinatorial_dual} as $C(\X,\M)^\perp$. 

Therefore, if we aim at maximizing \eqref{eq:footprint_bound_combinatorial} or \eqref{eq:footprint_bound_combinatorial_dual} individually, we may restrict ourselves to decreasing sets.  
\end{remark}

All the bounds we have given for $C(\X,\M)$ are expressed purely combinatorially in terms of $X=\varphi(\Delta(I(\X)))$ and $M=\varphi(\M)$. Thus, the analysis in the following sections will directly consider two decreasing sets $M\subset X\subset \NN^m$, rather than some particular $\M$ and $\X$. 

\section{Wei duality of the footprint bound}\label{s:wei_duality}
In this section, we show that both the bounds \eqref{eq:footprint_bound_combinatorial} and \eqref{eq:footprint_bound_combinatorial_dual} are ``Wei duals'' of each other, i.e., they satisfy a relation similar to that of Theorem \ref{t:ghwdual}. Therefore, the footprint bound is sharp if and only if the dual footprint bound is sharp. By Remark \ref{r:evalcodes_to_monomial}, without loss of generality, we may think about monomial codes. Thus we consider $M$, which is the set of exponents of the set of monomials we evaluate.

Assume that we have fixed $M\subset X\subset \NN^m$, with $X$ a decreasing set, and let $n=\size{X}$, $k=\size{M}$. We define now two sequences of numbers which are closely related to the bounds \eqref{eq:footprint_bound_combinatorial} and \eqref{eq:footprint_bound_combinatorial_dual}:
$$
\Gamma(w):=\max \set{\size{U\cap M}: U \text{ is increasing}, \; \abs{U}=w },
$$
$$
\Gamma^\perp(w):=\max \set{\size{V \cap (X\setminus M)}: V \text{ is decreasing}, \; \abs{V}=w }.
$$
We consider the sequences $\set{u_r}_{r=1}^k$ and $\set{u^\perp_r}_{r=1}^{n-k}$, where $w\in \set{u_r}_{r=1}^k$ if and only if $\Gamma(w)-\Gamma(w-1)=1$, and $w\in \set{u^\perp_r}_{r=1}^{n-k}$ if and only if $\Gamma^\perp(w)-\Gamma^\perp(w-1)=1$. Note that both $\Gamma$ and $\Gamma^\perp$ are increasing, and they can increase at most by 1 (you may take $U$ with $\size{U}=w$, and remove a minimal element to obtain another increasing $U'$ with $\size{U'}=w-1$, and thus $\Gamma(w-1)\geq \Gamma(w)-1$; similarly for $\Gamma^\perp$). 

\begin{lemma}\label{l:jumps}
Let  $M\subset X\subset \NN^m$ with $X$ a decreasing set, and let $1\leq r \leq \size{M}$, $1\leq r'\leq \size{X}-\size{M}$. We have
$$
u_r=\min \set{\size{U}: U \text{ is increasing,}\size{U\cap M}\geq r} \text{ and}
$$
$$
u_{r'}^\perp= \min \set{\size{V}: V \text{ is decreasing,}\size{V\cap (X\setminus M)}\geq r'}.
$$
\end{lemma}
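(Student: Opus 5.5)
The plan is to reduce the statement to elementary properties of the nondecreasing integer function $\Gamma$; the second identity then follows verbatim with $\Gamma^\perp$, decreasing sets and $X\setminus M$ in place of $\Gamma$, increasing sets and $M$. Throughout, increasing sets are taken inside $X$, and $w$ ranges over $0\le w\le n$. First we record the boundary values. We have $\Gamma(0)=0$, since the only increasing set of size $0$ is $\emptyset$. We have $\Gamma(n)=\size{M}=k$, since $U=X$ is the only increasing set of size $n$.

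Next we use the fact, already noted before the lemma, that $\Gamma$ is nondecreasing with increments in $\set{0,1}$. The monotonicity deserves a check. Given an increasing $U$ with $\size{U}=w-1<n$, pick a maximal element of $X\setminus U$ and add it to $U$. The result is still increasing, because everything strictly above that element lies in $U$ already. Hence $\Gamma(w)\ge\Gamma(w-1)$. The bound $\Gamma(w-1)\ge\Gamma(w)-1$ comes from removing a minimal element of an optimal $U$.

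Since $\Gamma$ rises from $0$ to $k$ in steps of $0$ or $1$, there are exactly $k$ jump points $w$. Listing them increasingly as $u_1<\dots<u_k$, we get $\Gamma(w)\ge r$ if and only if $w\ge u_r$. Equivalently, $u_r=\min\set{w:\Gamma(w)\ge r}$. This is the key step: $\Gamma(u_r)=r$ and $\Gamma(u_r-1)=r-1$, because $\Gamma$ takes every intermediate value and the $r$-th jump is where it first reaches $r$.

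Finally, we identify $\min\set{w:\Gamma(w)\ge r}$ with $\min\set{\size{U}:U \text{ increasing},\ \size{U\cap M}\ge r}$. If some $U$ attains the right-hand minimum with $\size{U}=w$, then $\Gamma(w)\ge\size{U\cap M}\ge r$. Conversely, if $\Gamma(w)\ge r$, then the maximizer $U$ in the definition of $\Gamma(w)$ is increasing, has size $w$, and satisfies $\size{U\cap M}\ge r$. So the two minima agree.

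I expect no real obstacle. The only point needing care is the monotonicity of $\Gamma$, namely that an increasing set can always be enlarged by one element. This is exactly what makes the jumps well defined and makes the $r$-th jump the first $w$ with $\Gamma(w)\ge r$.
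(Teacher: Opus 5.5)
Your proposal is correct and follows essentially the same route as the paper. The paper notes that $u_r=\min\{w:\Gamma(w)\geq r\}$, and likewise for $\Gamma^\perp$, because $\Gamma$ starts at $0$ and increases in steps of at most $1$; it then unwinds the definition of $\Gamma$, and you additionally check the details it leaves implicit, namely that $\Gamma$ is nondecreasing (by adding a maximal element of $X\setminus U$) and that the two minima coincide.
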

\begin{proof}
By definition, we have
$$
u_r=\min\set{w: \Gamma(w)\geq r} \text{ and } u_{r'}^\perp=\min \set{w:\Gamma^\perp(w)\geq r'},
$$
since $\Gamma$ and $\Gamma^\perp$ increase by at most 1 when increasing $w$ by 1, which gives the result.
\end{proof}

\begin{lemma}\label{l:connection_FB}
Let  $M\subset X\subset \NN^m$ with $X$ a decreasing set. Then $u_r=\FB_X^r(M)$ and $u_{r'}^\perp=\FB_X^{r',\perp}(M)$, for any $1\leq r \leq \size{M}$ and $1\leq r'\leq \size{X}-\size{M}$. 
\end{lemma}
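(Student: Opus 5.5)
Given Lemma \ref{l:jumps}, it suffices to show that
$$\min\set{\size{U}: U \text{ increasing}, \size{U\cap M}\geq r}=\min\set{\size{\nabla_X(S)}:S\in \tbinom{M}{r}},$$
and the analogous identity for the dual side. I will prove this by two inequalities, using throughout that $\nabla_X(S)$ is the smallest increasing subset of $X$ containing $S$ (when $S\subset X$).

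\emph{Step 1 ($\leq$).} Take $S\in\binom{M}{r}$ attaining $\FB_X^r(M)$, and set $U=\nabla_X(S)$. Every $s\in S$ lies in $X$ and satisfies $s\preceq s$, so $s\in\nabla_X(s)$. Hence $S\subset U\cap M$, which gives $\size{U\cap M}\geq r$. Moreover $U$ is increasing, since $\alpha\in U$ and $\alpha\preceq\beta\in X$ give $s\preceq\beta$ for some $s\in S$. Therefore $u_r\leq\size{\nabla_X(S)}=\FB_X^r(M)$.

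\emph{Step 2 ($\geq$).} Take an increasing $U$ with $\size{U}=u_r$ and $\size{U\cap M}\geq r$, and choose any $S\subset U\cap M$ with $\size{S}=r$. Since $U$ is increasing and contains $S$, it contains $\nabla_X(S)$. Thus $\FB_X^r(M)\leq\size{\nabla_X(S)}\leq\size{U}=u_r$.

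\emph{Step 3 (dual side).} Repeat the argument with $\preceq$ reversed. Replace $M$ by $X\setminus M$, $\nabla_X$ by $\nabla_X^\perp$, and increasing sets by decreasing ones. For $S\subset X\setminus M$, the set $\nabla_X^\perp(S)$ contains $S$ and is decreasing, because $\beta\preceq\alpha\preceq s$ forces $\beta\in X$ (as $X$ is decreasing) and $\beta\preceq s$. Any decreasing $V$ containing $S$ contains $\nabla_X^\perp(S)$. Here "decreasing" should be read for subsets of $X$, which coincides with the $\NN^m$ notion because $X$ itself is decreasing; this is the one point to check carefully.

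There is no real obstacle. The only subtlety is the minimality claim in Step 2, namely that the increasing set $U$ absorbs the whole shadow of $S$. This follows directly from the definition of increasing relative to $X$.
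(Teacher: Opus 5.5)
Your proof is correct and follows essentially the same route as the paper. Both arguments get $u_r\leq\FB_X^r(M)$ by taking $U=\nabla_X(S)$ for an optimal $S$, get the reverse inequality by choosing $S\subset U\cap M$ for an optimal $U$, and treat the dual case symmetrically. Your extra checks (that $S\subset\nabla_X(S)$, that shadows are increasing, and that $\nabla_X^\perp(S)$ is decreasing because $X$ is) spell out what the paper leaves implicit.
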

\begin{proof}
Consider $A\subset M$ with $\size{A}=r$ such that $\size{\nabla_X(A)}=\FB_X^r(M)$. Then $\nabla_X(A)$ is an increasing set, and $\size{\nabla_X(A)\cap M}\geq r$. Thus, $u_r\leq\size{\nabla_X(A)}=\FB_X^r(M)$ by Lemma \ref{l:jumps}. Now take $U$ increasing such that $\size{U}=u_r$, and $\size{U\cap M}\geq r$. We can take $A\subset U\cap M$ with $\size{A}=r$, and then $\nabla_X(A)\subset U$, i.e., $\FB_X^r(M)\leq \size{U}=u_r$. The proof for $u_{r'}^\perp$ is analogous.
\end{proof}

\begin{theorem}\label{t:duality_footprint}
Let  $M\subset X\subset \NN^m$ with $X$ a decreasing set. Then 
$$
\set{\FB_X^r(M)}_{r=1}^{\size{M}}\sqcup \set{n+1-\FB_X^{r',\perp}(M)}_{r'=1}^{\size{X}-\size{M}}
=\set{1,\dots,\size{X}}.
$$
\end{theorem}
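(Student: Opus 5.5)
By Lemma \ref{l:connection_FB}, it suffices to prove the disjoint-union identity for the jump sequences:
$$
\set{u_r}_{r=1}^{k}\sqcup\set{n+1-u^\perp_{r'}}_{r'=1}^{n-k}=\set{1,\dots,n},
$$
where $n=\size{X}$ and $k=\size{M}$. The plan is to relate $\Gamma$ and $\Gamma^\perp$ through complementation. A set $U\subset X$ is increasing if and only if $V=X\setminus U$ is decreasing, and then
$$
\size{U\cap M}=k-\size{V\cap M}=k-\bigl(\size{V}-\size{V\cap(X\setminus M)}\bigr).
$$
Maximizing over $U$ increasing with $\size{U}=w$ is the same as maximizing over $V$ decreasing with $\size{V}=n-w$. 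This gives the key identity
$$
\Gamma(w)=k-(n-w)+\Gamma^\perp(n-w),\qquad 0\leq w\leq n.
$$
The boundary values are $\Gamma(0)=0$, $\Gamma(n)=k$, $\Gamma^\perp(0)=0$ and $\Gamma^\perp(n)=n-k$. This uses that $\emptyset$ and $X$ are both increasing and decreasing, which holds since $X$ is the ambient set.

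Next I take differences. For $1\leq w\leq n$,
$$
\Gamma(w)-\Gamma(w-1)=1-\bigl(\Gamma^\perp(n-w+1)-\Gamma^\perp(n-w)\bigr).
$$
Both $\Gamma$ and $\Gamma^\perp$ are nondecreasing with increments in $\set{0,1}$, as noted before Lemma \ref{l:jumps}. Hence for each $w\in\set{1,\dots,n}$ exactly one of two things happens. Either $\Gamma$ jumps at $w$, meaning $w=u_r$ for some $r$. Or $\Gamma^\perp$ jumps at $w'=n+1-w$, meaning $n+1-w=u^\perp_{r'}$ for some $r'$, so that $w=n+1-u^\perp_{r'}$. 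These two cases are exclusive.

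It remains to count. $\Gamma$ rises from $0$ to $k$, so it has exactly $k$ jumps, and these are the distinct values $u_1<\dots<u_k$. Likewise $\Gamma^\perp$ has exactly $n-k$ jumps $u^\perp_1<\dots<u^\perp_{n-k}$. So the two families are disjoint and together cover $\set{1,\dots,n}$, which is the claimed partition. Substituting $u_r=\FB_X^r(M)$ and $u^\perp_{r'}=\FB_X^{r',\perp}(M)$ from Lemma \ref{l:connection_FB} finishes the proof.

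The only point needing care is the complementation identity. I must check that the bijection $U\leftrightarrow X\setminus U$ preserves the size constraint exactly, which it does because all sets live inside $X$. I must also check that no implicit requirement, such as nonemptiness, breaks the $w=0$ and $w=n$ cases. Everything else is bookkeeping.
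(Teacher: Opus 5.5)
Your proposal is correct and follows essentially the same route as the paper: you use the complementation identity $\Gamma(w)=k-n+w+\Gamma^\perp(n-w)$, take differences using that both functions increase by $0$ or $1$, and then convert via Lemma \ref{l:connection_FB}. Your explicit jump count and boundary values ($\Gamma(0)=0$, $\Gamma(n)=k$, and so on) simply spell out bookkeeping that the paper leaves implicit in its indexing of $\{u_r\}_{r=1}^{k}$ and $\{u^\perp_{r'}\}_{r'=1}^{n-k}$.
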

\begin{proof}
Let $n=\size{X}$ and $k=\size{M}$. Consider $U$ an increasing set of size $w$. Then $V:=X\setminus U$ is a decreasing set with size $n-w$. We have 
$$
\size{V\cap M}=\size{V}-\size{V \cap (X\setminus M)} \text{ and }\size{U\cap M}=\size{M}-\size{V \cap M}.
$$
Substituting the first expression in the second one, we obtain
$$
\size{U\cap M}=k-n+w+\size{V \cap (X\setminus M)}.
$$
If we fix $w$, then $k-n+w$ is constant, and then
$$
\max_{\size{U}=w}\size{U\cap M}=k-n+w+\max_{\size{V}=n-w}\size{V \cap (X\setminus M)}.
$$
This also implies $\Gamma(w)=k-n+w+\Gamma^\perp(n-w)$. Taking differences, we obtain
$$
\Gamma(w)-\Gamma(w-1)=1+\Gamma^\perp(n-w)-\Gamma^\perp(n-w+1).
$$
Thus, since we know $\Gamma(w)-\Gamma(w-1)\in \set{0,1}$, we get $\Gamma(w)=\Gamma(w-1)+1$ if and only if $\Gamma^\perp(n-w)-\Gamma^\perp(n-w+1)=0$. In other words, $w\in \set{u_r}_{r=1}^{k}$ if and only if $n-w+1\not \in  \set{u_{r'}^\perp}_{r'=1}^{n-k}$. We finish the proof by Lemma \ref{l:connection_FB}.
\end{proof}

\begin{remark}
As a consequence of Theorem \ref{t:duality_footprint}, the footprint bound is sharp for all the GHWs if and only if the dual footprint bound is sharp for all the GHWs. 
\end{remark}

\begin{example}
In \cite{sanjose_simplex}, the authors prove that, for some evaluation codes defined over a simplex (introduced in \cite{kopparty_high_rate}), the footprint bound is sharp. Consequently, by Theorem \ref{t:duality_footprint}, the dual footprint bound is sharp for their dual codes. 
\end{example}

\begin{corollary}\label{c:monotonicity_fb}
Let  $M\subset X\subset \NN^m$ with $X$ a decreasing set. Then
$$
\FB_X^1(M)<\FB_X^2(M)<\cdots <\FB_X^{\size{M}}(M), 
$$
$$
\FB_X^{1,\perp}(M)<\FB_X^{2,\perp}(M)<\cdots <\FB_X^{\size{X}-\size{M},\perp}(M).
$$
\end{corollary}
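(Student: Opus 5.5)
The plan is to read strict monotonicity directly off the construction of the sequences $\set{u_r}_{r=1}^{\size{M}}$ and $\set{u^\perp_{r'}}_{r'=1}^{\size{X}-\size{M}}$. Lemma \ref{l:connection_FB} gives $\FB_X^r(M)=u_r$ and $\FB_X^{r',\perp}(M)=u^\perp_{r'}$. So it suffices to show that $u_1<u_2<\cdots<u_{\size{M}}$ and $u^\perp_1<\cdots<u^\perp_{\size{X}-\size{M}}$.

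By the proof of Lemma \ref{l:jumps}, $u_r=\min\set{w:\Gamma(w)\geq r}$, where $\Gamma$ is nondecreasing, $\Gamma(0)=0$, and $\Gamma$ increases by at most $1$ at each step.

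First, $\Gamma(n)=\size{M}$, because the only increasing set of size $n$ is $X$ itself. Hence each value $r\in\set{1,\dots,\size{M}}$ is attained, and $u_r$ is well defined. At $w=u_r$ we have $\Gamma(u_r)=r$ exactly: $\Gamma(u_r-1)<r$, and the jump is at most $1$. Since $\Gamma$ is nondecreasing, $\Gamma(w)\leq \Gamma(u_r)=r<r+1$ for all $w\leq u_r$. Therefore $u_{r+1}>u_r$.

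The same argument applied to $\Gamma^\perp$ gives the second chain. Here $\Gamma^\perp(0)=0$ and $\Gamma^\perp(n)=\size{X}-\size{M}$, using $V=X$. Monotonicity of $\Gamma^\perp$ and the jumps of at most $1$ were already noted in the text.

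An alternative route is through Theorem \ref{t:duality_footprint}. The union there is disjoint with exactly $n$ elements on the left, so the $\FB_X^r(M)$ are pairwise distinct. Strict increase then follows because $\FB_X^r(M)$ is nondecreasing in $r$: a minimizing $S$ of size $r+1$ contains a subset of size $r$ whose shadow is no larger.

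There is no real obstacle. The only point needing care is that the elements $u_r$ are indexed in increasing order, that is, the $r$-th jump of $\Gamma$ occurs at $u_r$. This is immediate from the characterization $u_r=\min\set{w:\Gamma(w)\geq r}$.
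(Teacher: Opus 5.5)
Your proof is correct. Your ``alternative route'' is exactly the paper's proof. The disjoint union in Theorem~\ref{t:duality_footprint} forces the $n$ values to be pairwise distinct, and non-strict monotonicity follows by passing to subsets.

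Your main route is genuinely different and more economical. It bypasses the duality theorem and uses only Lemmas~\ref{l:jumps} and~\ref{l:connection_FB}. Strictness comes from the fact that $\Gamma$ (resp.\ $\Gamma^\perp$) climbs from $0$ to $\size{M}$ (resp.\ $\size{X}-\size{M}$) in steps of $0$ or $1$. Hence $\Gamma(u_r)=r$, and so $u_{r+1}>u_r$. This locates the strictness where it actually originates: the counting in the proof of Theorem~\ref{t:duality_footprint} already presupposes that $\set{u_r}_{r=1}^{\size{M}}$ consists of $\size{M}$ distinct jump points. The paper's phrasing instead presents strict monotonicity as a formal by-product of the Wei-type duality, which is the conceptual point it wants to stress.

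The ``steps of at most one'' property you rely on is the same mechanism as in the remark following the corollary. That mechanism deletes a minimal element of an increasing set, or a maximal element of a decreasing one. The remark applies it directly to a minimizing $B\in\binom{M}{r}$, which gives a third proof needing neither $\Gamma$ nor the duality theorem.
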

\begin{proof}
This is a consequence of Theorem \ref{t:duality_footprint}, since non-strict monotonicity follows directly from the definitions.
\end{proof}

\begin{remark}
Corollary \ref{c:monotonicity_fb} can also be proven directly: given a set $B\in \binom{M}{r}$ such that $\size{\nabla_X(B)}=\FB_X^r(M)$, if we consider $B'\in \binom{M}{r-1}$ obtained by removing a minimal element of $B$, we get $\FB_X^{r-1}(M)\leq \size{\nabla_X(B')}\leq  \size{\nabla_X(B)}-1<\FB_X^r(M)$, and similarly for the dual footprint. Moreover, this also proves the strict monotonicity from Corollary \ref{c:monotonicity_fb} for the bounds from \eqref{eq:fprelative} and \eqref{eq:fprelative_dual}.
\end{remark}

Analogous results to Theorem \ref{t:duality_footprint} and Corollary \ref{c:monotonicity_fb} can be obtained for the Feng-Rao bound \cite{fengraoMajority} and the Andersen-Geil bound \cite{andersen_geil_bound}. Let $\mathcal{Y}$ be a smooth projective curve over $\fq$, let $P_1,\dots,P_n,Q$ be distinct $\fq$-rational points of
$\mathcal{Y}$, and let $D=P_1+\cdots+P_n$. Using the usual notation for one-point AG codes $C(D,\lambda Q)$, we consider the Weierstrass semigroup $H(Q)$ for $Q\in \mathcal{Y}$, and $H^*(Q)\subset H(Q)$ is the finite set of non-gaps where the dimension of the corresponding one-point AG code actually increases, i.e.,
$$
H^*(Q)=\set{\lambda \in H(Q):C(D,\lambda Q)\neq C(D,(\lambda-1)Q)}.
$$
Note that $H^*(Q)$ depends on $D$, and $\size{H^*(Q)}=n$. For each $\lambda \in H^*(Q)$, we choose $f_\lambda \in \mathcal{L}(\infty Q)$ with $-v_Q(f_\lambda)=\lambda$.  We may consider $M\subset H^*(Q)$. For $\alpha,\beta\in \NN$, we define the relation $\alpha \preceq_Q \beta$ if and only if $\beta-\alpha \in H(Q)$. For any $S\subset H^*(Q)$, we can consider $\nabla_Q(S)=\set{\beta \in H^*(Q):\exists \alpha \in S:\alpha \preceq_Q \beta}$ and $\nabla^\perp_Q(S)=\set{\alpha \in H^*(Q):\exists \beta \in S:\alpha \preceq_Q \beta}$. We denote
$$
\mathcal{C}_M(D,Q):=\text{Span}_{\fq}\set{\ev_{\set{P_1,\dots,P_n}}(f_\lambda):\lambda\in M}.
$$
We define the Andersen-Geil bound as in \eqref{eq:footprint_bound_combinatorial}, and the Feng-Rao bound as in \eqref{eq:footprint_bound_combinatorial_dual}, using the alternative definitions for $X$ ($H^*(Q)$ plays the role of $X$), $M$, $\preceq_Q$, $\nabla_Q$, and $\nabla_Q^\perp$ (similarly for the bounds for the RGHWs from \eqref{eq:fprelative} and \eqref{eq:fprelative_dual}), i.e., we define

\begin{equation}\label{eq:andersen_geil}
d_r(\mathcal{C}_M(D,Q))\geq \AG_Q^r(M):=\min\set{ \size{\nabla_Q(S)}:S\in \binom{M}{r}}, 
\end{equation}
\begin{equation}\label{eq:feng_rao}
d_r(\mathcal{C}_M(D,Q)^\perp)\geq  \FR_Q^{r}(M):=\min\set{ \size{\nabla^\perp_Q(S)}:S\in \binom{H^*(Q)\setminus M}{r}}.
\end{equation}

For $r=1$, these bounds have been shown to be consequences of each other in \cite{geil_feng_rao_decoding_primary}. In general, they also satisfy the following Wei-type duality and strict monotonicity.

\begin{theorem}\label{t:duality_fr_ag}
With the notation as above, we have
$$
\set{\AG_Q^r(M)}_{r=1}^{\size{M}}\sqcup \set{n+1-\FR_Q^{r'}(M)}_{r'=1}^{n-\size{M}}
=\set{1,\dots,n}.
$$
Moreover,
\[
\AG_Q^1(M)<\cdots<\AG_Q^{|M|}(M)
\]
and
\[
\FR_Q^{1}(M)<\cdots<
\FR_Q^{n-|M|}(M).
\]
\end{theorem}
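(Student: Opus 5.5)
The plan is to repeat the proof of Theorem \ref{t:duality_footprint} verbatim, replacing $\NN^m$ with its partial order $\preceq$ by $H^*(Q)$ with the relation $\preceq_Q$. The first step is to check that $\preceq_Q$ restricted to $H^*(Q)$ is a partial order. Reflexivity holds since $0\in H(Q)$. Transitivity holds since $H(Q)$ is closed under addition. Antisymmetry holds because $H(Q)\subset\NN$: if $\beta-\alpha$ and $\alpha-\beta$ both lie in $H(Q)$, then $\alpha=\beta$.

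With this order, call $U\subset H^*(Q)$ increasing if $\alpha\in U$ and $\alpha\preceq_Q\beta\in H^*(Q)$ imply $\beta\in U$. Then $V$ is decreasing exactly when $H^*(Q)\setminus V$ is increasing. Define $\Gamma$ and $\Gamma^\perp$ as before, with $X:=H^*(Q)$.

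Next I would check that the earlier arguments only used abstract poset facts.

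\begin{itemize}
\item The increments of $\Gamma$ and $\Gamma^\perp$ lie in $\{0,1\}$. To see this, remove a $\preceq_Q$-minimal (respectively maximal) element from an optimal increasing (respectively decreasing) set; such an element exists because the set is finite and nonempty.
\item Lemma \ref{l:jumps} follows from this increment property alone.
\item The proof of Lemma \ref{l:connection_FB} uses only that $\nabla_Q(A)$ is increasing, that it contains $A$, and that $\nabla_Q(A)\subset U$ whenever $A\subset U$ with $U$ increasing. The same holds for $\nabla^\perp_Q$ with decreasing sets. All of these are immediate from transitivity and reflexivity of $\preceq_Q$.
\end{itemize}

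Hence $u_r=\AG_Q^r(M)$ and $u^\perp_{r'}=\FR_Q^{r'}(M)$. One point needs care: in the earlier setting $X$ was assumed decreasing in $\NN^m$. Here no such assumption is needed, because all shadows are taken inside $H^*(Q)$ by definition.

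The counting identity $\Gamma(w)=k-n+w+\Gamma^\perp(n-w)$ is pure set complementation and goes through unchanged. Taking differences then yields the disjoint-union statement, exactly as in Theorem \ref{t:duality_footprint}.

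Strict monotonicity then follows as in Corollary \ref{c:monotonicity_fb}. The duality forces the values to be distinct, and non-strict monotonicity is clear from the definitions. Alternatively, removing a minimal element of an optimal $S$ drops the shadow size by at least one, since that element lies in $\nabla_Q(S)$ but not in $\nabla_Q(S')$; this second argument relies on antisymmetry.

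The only step that requires genuine verification rather than transcription is that $\preceq_Q$ is a partial order on $H^*(Q)$. Antisymmetry is the one place where the structure of $H(Q)$ as a submonoid of $\NN$ enters.
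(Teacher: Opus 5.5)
Your proposal is correct and matches the paper's proof, which simply states that the result follows from the proofs of Theorem \ref{t:duality_footprint} and Corollary \ref{c:monotonicity_fb} with $X=H^*(Q)$ and with $\nabla_X,\nabla_X^\perp$ replaced by $\nabla_Q,\nabla_Q^\perp$. Your check that $\preceq_Q$ is a partial order on $H^*(Q)$ (antisymmetry from $H(Q)\subset\NN$), and that only abstract poset facts are used, is exactly the verification the paper leaves implicit.
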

\begin{proof}
It follows from the proofs of Theorem \ref{t:duality_footprint} and \ref{c:monotonicity_fb}, with
$X=H^*(Q)$ and with $\nabla_X$ and $\nabla_X^\perp$ replaced by
$\nabla_Q$ and $\nabla_Q^\perp$, respectively.
\end{proof}

\begin{example}
In \cite{sanjoseGHWNT}, the authors consider a modified footprint bound, which is sharp for decreasing norm-trace codes. Since the duals are monomially equivalent to decreasing norm-trace codes, one can use the same bound for the dual codes, and it is also sharp. In \cite{geil_secret_sharing_norm_trace}, the author shows that this footprint bound equals the Andersen-Geil bound for primary codes (or Feng-Rao for the dual codes). Thus, arguing as above, one can also deduce the sharpness of the Feng-Rao bound from Andersen-Geil's, or vice versa. 
\end{example}

In many well-known cases, the dual of a monomial evaluation code is monomially equivalent to another monomial evaluation code that uses the same evaluation points \cite{lopez_dual_evaluation_code,duursma_reed_muller_complete_intersections,sarabia_gorenstein_dual_codes}. The following result shows that, in those cases, the footprint bound associated with the monomials that define the dual coincides with the dual footprint bound.

\begin{proposition}\label{p:order_reversing}
Let $\phi:X\to X$ be an order-reversing bijection, meaning $\alpha\preceq \beta$ if and only if $\phi(\beta)\preceq \phi(\alpha)$. Let $M_2\subset M_1\subset X$ such that $\alpha \in M_1\setminus M_2$ implies $\alpha \succ\beta$, for any $\beta \in M_2$, and consider $M_i^\perp:=\phi(X\setminus M_i)$, for $i=1,2$. Then, for $1\leq r \leq \size{M_1}-\size{M_2}$, we have
$$
\FB_X^{r,\perp}(M_1,M_2)=\FB_X^{r}(M_2^\perp,M_1^\perp) \text{ and } \FB_X^{r,\perp}(M^\perp_2,M^\perp_1)=\FB_X^{r}(M_1,M_2).
$$
\end{proposition}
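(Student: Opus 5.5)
The plan is to transport everything through the bijection $\phi$. The key observation is that an order-reversing bijection $\phi\colon X\to X$ exchanges upward and downward shadows. For any $\beta\in X$,
$$
\phi(\nabla^\perp_X(\beta))=\set{\phi(\alpha):\alpha\in X,\ \alpha\preceq\beta}=\set{\gamma\in X:\phi(\beta)\preceq\gamma}=\nabla_X(\phi(\beta)).
$$
Here we use that $\phi$ is surjective onto $X$ and that $\alpha\preceq\beta$ if and only if $\phi(\beta)\preceq\phi(\alpha)$. Taking unions, $\phi(\nabla^\perp_X(S))=\nabla_X(\phi(S))$ for every $S\subset X$. Since $\phi$ is a bijection, this gives $\size{\nabla^\perp_X(S)}=\size{\nabla_X(\phi(S))}$. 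Symmetrically, $\phi^{-1}$ is also order-reversing, so $\size{\nabla_X(S)}=\size{\nabla^\perp_X(\phi(S))}$.

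Next, I identify the index sets. We have
$$
M_1^\perp\setminus M_2^\perp=\phi(X\setminus M_1)^c\cap\phi(X\setminus M_2)=\phi\bigl((X\setminus M_2)\setminus(X\setminus M_1)\bigr)=\phi(M_1\setminus M_2),
$$
using that $\phi$ is a bijection. In particular $M_1^\perp\subset M_2^\perp$, so the pair $(M_2^\perp,M_1^\perp)$ is nested in the right order. Moreover, $S\mapsto\phi(S)$ is a bijection between $\binom{M_1\setminus M_2}{r}$ and $\binom{M_2^\perp\setminus M_1^\perp}{r}$. Therefore
$$
\FB_X^{r,\perp}(M_1,M_2)=\min_{S}\size{\nabla^\perp_X(S)}=\min_{S}\size{\nabla_X(\phi(S))}=\FB_X^{r}(M_2^\perp,M_1^\perp).
$$
The second identity follows the same way. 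We have $\FB_X^{r,\perp}(M_2^\perp,M_1^\perp)=\min_{T}\size{\nabla_X^\perp(T)}$ over $T\in\binom{\phi(M_1\setminus M_2)}{r}$. Writing $T=\phi(S)$ and using $\phi^{-1}(\nabla^\perp_X(\phi(S)))=\nabla_X(S)$, this equals $\FB_X^r(M_1,M_2)$.

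The remaining point, which is the only potential obstacle, is checking that the right-hand sides are legitimate instances of the definitions \eqref{eq:fprelative} and \eqref{eq:fprelative_dual}. These were stated under the hypothesis that every element of the larger set minus the smaller one strictly dominates every element of the smaller set. So I must verify this for $(M_2^\perp,M_1^\perp)$. Take $\gamma=\phi(\alpha)$ with $\alpha\in M_1\setminus M_2$, and $\delta=\phi(\beta)$ with $\beta\in X\setminus M_1$. The needed condition $\gamma\succ\delta$ is equivalent, after applying $\phi^{-1}$, to $\alpha\prec\beta$. This does not follow verbatim from the hypothesis as stated, which compares $M_1\setminus M_2$ with $M_2$, not with $X\setminus M_1$. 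I would first check whether the combinatorial quantities $\FB$ are defined by the displayed minima regardless of that hypothesis. If so, the hypothesis only matters for interpreting them as bounds on RGHWs, and the proposition is the purely formal identity proved above, with the hypothesis on $M_1,M_2$ not needed for the equalities themselves. I would note this explicitly in the proof.
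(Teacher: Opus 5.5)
Your proof is correct and is essentially the paper's argument: you map each $B\in\binom{M_1\setminus M_2}{r}$ to $\phi(B)\subset M_2^\perp\setminus M_1^\perp$, use $\phi(\nabla^\perp_X(B))=\nabla_X(\phi(B))$ (and the analogous identity for the second equality), and take minima. You also make explicit the bijection of $r$-subsets that the paper leaves implicit. Your closing remark is right: $\FB$ is defined as the displayed minimum and the paper's proof never uses the domination hypothesis. One small fix: where you wrote $M_1^\perp\setminus M_2^\perp$ in the index-set computation, it should read $M_2^\perp\setminus M_1^\perp$.
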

\begin{proof}
Let $B\in \binom{M_1\setminus M_2}{r}$, and consider $B^\perp:=\phi(B)\subset M_2^\perp\setminus M_1^\perp$. Since $\phi$ is order-reversing, we have $\phi(\nabla_X^\perp(B))=\nabla_X(B^\perp)$. Indeed, $\alpha\in \phi(\nabla^\perp_X(B))$ if and only if there is $\beta\in B$ and $\gamma$ such that $\alpha=\phi(\gamma)$ and $\gamma \preceq \beta$, i.e., we have $\phi(\beta) \preceq \alpha$, which is equivalent to $\alpha \in \nabla_X(B^\perp)$. Therefore, $\size{\phi(\nabla_X^\perp(B))}=\size{\nabla_X^\perp(B)}=\size{\nabla_X(B^\perp)}$, and we obtain the first equality by taking the minimum over all $B\in \binom{M_1\setminus M_2}{r}$. The second equality follows from $\phi(\nabla_X(B))=\nabla^\perp_X(B^\perp)$, which can be proven in an analogous way.
\end{proof}

For simplicity, in the next examples we assume $M_1=X$, and thus $\FB_X^{r,\perp}(X,M)=\FB_X^{r,\perp}(M)$, $\FB_X^{r}(M^\perp,\set{0})=\FB_X^{r}(M^\perp)$. 

\begin{example}
Let $\X=A_1\times \cdots \times A_m$ be a Cartesian product of sets $A_i\subset \fq$ of size $\size{A_i}=d_i$, for $1\leq i \leq m$. Then $X=\set{0,\dots,d_1-1}\times \cdots \times \set{0,\dots,d_m-1}$. Let $\phi:X\to X$ be the map defined by $\phi(\alpha):=(d_1-1,\dots,d_m-1)-\alpha$. It is clear that it is an order-reversing bijection. For any decreasing set of monomials $\M\subset \fq[x_1,\dots,x_m]$, consider $M=\varphi(\M)$ and $M^\perp:=\phi(X\setminus M)$ as above. It follows from \cite[Thm. 5.4]{lopez_dual_evaluation_code} that the dual of $C(\X,\M)$ is monomially equivalent to $C(\X,\M^\perp)$, where $\M^\perp:=\set{x^\alpha \in \fq[x_1,\dots,x_m]:\alpha \in \phi(X\setminus M)}$. By Proposition \ref{p:order_reversing}, we have that $\FB_X^{r,\perp}(M)=\FB_X^{r}(M^\perp)$, for any $1\leq r \leq \abs{X}-\abs{M}$. 
\end{example}

As with Proposition \ref{p:order_reversing}, if we consider a one-point AG code $\mathcal{C}_M(D,\lambda Q)$ such that its dual is of the form $\mathcal{C}_{M^\perp}(D,\lambda^\perp Q)$, and there exists $\phi:H^*(Q)\to H^*(Q)$ an order reversing bijection such that $M^\perp=\phi(H^*(Q)\setminus M)$, then 
\[
\FR_Q^r(M)=\AG_Q^r(M^\perp),
\qquad
1\leq r\leq n-|M|.
\]
A similar result follows for the bounds of the RGHWs of pairs of such codes, which are defined in an analogous manner to \eqref{eq:fprelative} and \eqref{eq:fprelative_dual}.

\begin{example}\label{ex:onepointcodes}
Consider a one-point AG code $C(D,\lambda Q)$ such that $C(D,\lambda Q)^\perp\simeq C(D,(\mu -\lambda )Q)$, up to monomial equivalence, and consider $\phi(\alpha)=\mu+1-\alpha$. If we assume 
\[
\phi(H^\ast(Q))=H^\ast(Q), \text{ or, equivalently, } \alpha\in H^\ast(Q) \iff \mu+1-\alpha\in H^\ast(Q),
\]
then \(\phi\) is an order-reversing bijection. Consequently, if $M:=\{\alpha\in H^\ast(Q):\alpha\le \lambda\}$ and  $M^\perp:=\{\alpha\in H^\ast(Q):\alpha\le \mu -\lambda\}$, then $M^\perp=\phi\bigl(H^\ast(Q)\setminus M\bigr)$. It follows that
\[
\FR_Q^r(M)=\AG_Q^r(M^\perp)
\]
for $1\leq r\leq n-|M|$.

The condition $H^\ast(Q)=\mu+1-H^\ast(Q)$ is satisfied in some well-known cases. For example, if $D$ corresponds to the sum of all the affine rational points of the Hermitian curve and $Q$ corresponds to the point at infinity, we have
\[
H^\ast(Q)=\left\{iq+j(q+1):0\le i\le q^2-1,\; 0\le j\le q-1\right\}.
\]
Then we know that $C(D,\lambda Q)^\perp\simeq C(D,(q^3+q^2-q-2-\lambda)Q)$. If we take $\mu=q^3+q^2-q-2$, then we can check that $H^\ast(Q)=\mu+1-H^\ast(Q)$ (note that $\mu+1=(q^2-1)q+(q-1)(q+1)$), and the map $\phi(\alpha)=\mu+1-\alpha$ is an order-reversing bijection. 
\end{example}

The condition $H^\ast(Q)=\mu+1-H^\ast(Q)$ is closely related to the conditions required to obtain isometry dual flags of one-point AG codes. A complete flag is a sequence of codes $\set{0}=C_0\subsetneq C_1\subsetneq \cdots \subsetneq C_n= \fq^n$, and its dual flag is $\set{0}=C_n^\perp \subsetneq C_{n-1}^\perp \subsetneq\cdots \subsetneq C_0^\perp=\fq^n$. A flag is isometry-dual if there is an isometry $g$  such that $g(C_i)=C^\perp_{n-i}$, for $0\leq i \leq n$. The condition $H^\ast(Q)=\mu+1-H^\ast(Q)$ implies that the associated flag is isometry dual (assuming that the dual of $C(D,\lambda Q)$ is as in Example \ref{ex:onepointcodes}). For example, if $n\geq  2g+2$, one can take $\mu=n+2g-2$ (see \cite{geil_on_order_bound_ag,bras_isometry_dual}).

\section{Asymptotic analysis}\label{s:asymptotic}
We devote this section to proving that the footprint bound is insufficient to guarantee that a family of evaluation codes is asymptotically good, in the following sense.

\begin{definition}
Given a sequence of codes $\{C_n\}_{n=1}^\infty$, we say that it is \textit{asymptotically good} if we have
$$
\liminf_{n\to \infty }\frac{\dim(C_n)}{n}>0 \text{ and }\liminf_{n\to \infty }\frac{d_1(C_n)}{n}>0.
$$
\end{definition}

We start with a technical lemma, which will lead to Theorem \ref{t:asymptotically_bad_lattices}, the main result of the section.

\begin{lemma}\label{l:key_lemma_asymptotic}
For any decreasing set $X\subset \bqm$ with $\abs{X}=n_X\geq 2$, we have
\begin{equation}\label{eq:lemma_gamma}
\sum_{z\in X}\prod_{i=1}^m (z_i+1)\leq n_X^\gamma,
\end{equation}
where $\gamma=\log_q\left(\frac{q(q+1)}{2}\right)$. 
\end{lemma}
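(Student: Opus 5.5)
The plan is to proceed by induction on $m$, slicing $X$ along the last coordinate. The key combinatorial observation is that, since $X$ is decreasing, $\prod_{i=1}^m(z_i+1)$ is exactly the number of $y\in X$ with $y\preceq z$. Write $S(X):=\sum_{z\in X}\prod_{i}(z_i+1)$. I will prove the slightly stronger statement $S(X)\leq \size{X}^\gamma$ for every decreasing $X\subset\bqm$ and every $m\geq 0$, with the convention $0^\gamma=0$. The hypothesis $n_X\geq 2$ is then not needed.

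\textbf{Base case and reduction.} For $m=0$, the set $X$ is either empty or a single point. In both cases the claim is trivial, since $1=1^\gamma$.

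For $m\geq 1$, put $X_j:=\set{z'\in[0,q-1]^{m-1}:(z',j)\in X}$ for $0\leq j\leq q-1$. Each $X_j$ is decreasing, and the slices are nested: $X_0\supseteq X_1\supseteq\cdots\supseteq X_{q-1}$. Hence $a_j:=\size{X_j}$ satisfies $a_0\geq a_1\geq\cdots\geq a_{q-1}\geq 0$, and $\sum_j a_j=n_X$. Splitting the sum by the value of the last coordinate and applying the induction hypothesis to each slice gives
$$
S(X)=\sum_{j=0}^{q-1}(j+1)\,S(X_j)\leq \sum_{j=0}^{q-1}(j+1)\,a_j^\gamma .
$$
So it suffices to prove the purely numerical inequality
$$
\sum_{j=0}^{q-1}(j+1)a_j^\gamma\leq \Bigl(\sum_{j=0}^{q-1}a_j\Bigr)^\gamma \quad\text{for all reals } a_0\geq\cdots\geq a_{q-1}\geq 0 .
$$

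\textbf{The numerical inequality.} Both sides are homogeneous of degree $\gamma$, so I fix $N=\sum_j a_j>0$. The admissible vectors then form a polytope, and its extreme points are $a^{(k)}=\frac{N}{k}(1,\dots,1,0,\dots,0)$, with $k$ ones, for $1\leq k\leq q$. Since $\gamma\geq 1$, the left-hand side is a convex function of $a$, so its maximum over the polytope is attained at one of these vertices. At $a^{(k)}$ the inequality reads
$$
\Bigl(\frac{N}{k}\Bigr)^\gamma\frac{k(k+1)}{2}\leq N^\gamma,
\qquad\text{that is,}\qquad
\frac{k(k+1)}{2}\leq k^\gamma .
$$
Equivalently, $\log\frac{k+1}{2}\leq(\gamma-1)\log k$, where $\gamma-1=\log_q\frac{q+1}{2}$.

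\textbf{Main obstacle.} This last step, the one-variable inequality for $1\leq k\leq q$, is where the specific value of $\gamma$ matters; equality holds at $k=q$, corresponding to the full box. I would handle it by setting $\psi(s):=\log\frac{e^s+1}{2}$. This function is convex, since $\psi''(s)=e^s/(e^s+1)^2>0$, and it satisfies $\psi(0)=0$. Therefore $\psi(s)/s$ is nondecreasing for $s>0$. Taking $s=\log k\leq\log q$ gives
$$
\frac{\log\frac{k+1}{2}}{\log k}\leq\frac{\log\frac{q+1}{2}}{\log q}=\gamma-1 .
$$
The case $k=1$ is trivial. This closes the induction.
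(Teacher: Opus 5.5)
Your proof is correct and follows the paper's argument essentially step by step. You induct on $m$ by slicing along the last coordinate, reduce to $\sum_j (j+1)a_j^\gamma \le \bigl(\sum_j a_j\bigr)^\gamma$ over the ordered simplex, and use convexity to push the maximum to the vertices $\frac{N}{k}(1,\dots,1,0,\dots,0)$, where it becomes $k(k+1)/2\le k^\gamma$. The differences are minor: you start at $m=0$ with $0^\gamma=0$, which avoids the paper's $\abs{X_j}\le 1$ case split, and you prove the monotonicity of $\log_x\bigl(x(x+1)/2\bigr)=1+\psi(\log x)/\log x$ through the convexity of $\psi$, which the paper only asserts.
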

\begin{proof}
We argue by induction on $m$. Let $m=1$. Then $X=\{0,\dots,n_X-1\}$, and we need to show that 
\begin{equation}\label{eq:base_case_inducion}
\sum_{z\in X}\prod_{i=1}^m (z_i+1)=\frac{n_X(n_X+1)}{2}\leq n_X^\gamma. 
\end{equation}
By taking base-$n_X$ logarithms, it is enough to show that $f(x)=\log_x\left(\frac{x(x+1)}{2}\right)\leq \gamma=f(q)$, for all $2\leq x \leq q$ (recall that $X\subset [0,q-1]$, and $n_X\leq q$). This follows from the fact that the function is strictly increasing for $x> 1$.

Now we assume $m>1$. For each $j\in \{0,\dots,q-1\}$, consider
$$
X_j:=\{z\in [0,q-1]^{m-1}:(z,j)\in X\}. 
$$
By adding elements of $X$ with a fixed last coordinate, we can write
\begin{equation}\label{eq:eq_random}
\sum_{(z,j)\in X}\prod_{i=1}^{m-1} (z_i+1)(j+1)=\sum_{j=0}^{q-1}(j+1)\left( \sum_{z\in X_j}\prod_{i=1}^{m-1}(z_i+1) \right). 
\end{equation}
For every $j$ such that $\abs{X_j}\geq 2$, the induction hypothesis
gives
\[
\sum_{z\in X_j}\prod_{i=1}^{m-1}(z_i+1)
\leq n_{X_j}^{\gamma}.
\]
The same inequality is immediate when $\abs{X_j}\leq 1$. Hence,
summing in \eqref{eq:eq_random}, we obtain
\begin{equation}\label{eq:first_inequality}
\sum_{z\in X}\prod_{i=1}^m(z_i+1)
\leq
\sum_{j=0}^{q-1}(j+1)n_{X_j}^{\gamma}.
\end{equation}
Note that $n_X=\sum_{j=0}^{q-1} n_{X_j}$, which follows from the fact that $X=\bigsqcup_{j=0}^{q-1} X_j\times \{j\}$. Thus, to finish the proof, we have to prove
$$
\sum_{j=0}^{q-1}(j+1)n_{X_j}^\gamma\leq n_X^\gamma =  \left(\sum_{j=0}^{q-1} n_{X_j}\right)^\gamma .
$$
We denote $p_j:=n_{X_j}/n_X$. Since $X_{0}\supset X_{1}\supset \cdots \supset X_{q-1}$, we have $p_0\geq p_1\geq \cdots \geq p_{q-1}\geq 0$ and $\sum_{j=0}^{q-1}p_j=1$, and we have to show that
$$
\sum_{j=0}^{q-1}(j+1)p_j^\gamma \leq 1. 
$$
Consider $g(p_0,\dots,p_{q-1}):=\sum_{j=0}^{q-1}(j+1)p_j^\gamma $, which is strictly convex. Indeed, since $\gamma>1$, we have that $x^\gamma$ is strictly convex in $[0,\infty)$ (this can be checked using derivatives and the definition for $x=0$), and thus $g$ is strictly convex because it is a sum of strictly convex functions. Therefore, we are trying to find the maximum of a strictly convex function over a closed convex polytope $\mathcal{P}$ (the standard simplex, with the extra condition $p_0\geq p_1\geq \cdots \geq p_{q-1}$). This implies that the maximum is attained at one of the vertices of $\mathcal{P}$. The vertices of $\mathcal{P}$ are the points $P_k$ with $p_0=\cdots=p_{k-1}=1/k$, and $p_j=0$ for $j=k,\dots,q-1$ (this can be seen, for example, by considering the change of variables $z_k=(k+1)(p_k-p_{k+1})$, $0\leq k \leq q-2$, and $z_{q-1}=q p_{q-1}$, which maps $\mathcal{P}$ to the standard simplex). 
We have
\begin{equation}\label{eq:g_of_p}
g(P_k)=\sum_{j=0}^{k-1}(j+1)\frac{1}{k^\gamma}=\frac{k(k+1)}{2k^\gamma}\leq 1.
\end{equation}
For $2\leq k\leq q$, the last inequality follows from the
one-dimensional case \eqref{eq:base_case_inducion}, while
$g(P_1)=1$. Thus,
\[
g(p_0,\dots,p_{q-1})\leq1
\]
for every point in $\mathcal P$.
\end{proof}

\begin{theorem}\label{t:asymptotically_bad_lattices}
Let $\{M_m\}_{m=1}^\infty$, $\{X_m\}_{m=1}^\infty$ be two sequences of sets such that $\emptyset \neq M_m\subset X_m\subset \bqm$ and $X_m$ is decreasing, for $m\geq 1$. Assume that $\lim_{m\to \infty}\abs{X_m}=\infty $. Then 
$$
\lim_{m\to \infty}\frac{\abs{M_m}\FB^1_{X_m}(M_m)}{\abs{X_m}^2}=0.
$$
As a consequence, if both $\lim_{m\to \infty} \abs{M_m}/\abs{X_m}$ and $\lim_{m\to \infty} \FB^1_{X_m}(M_m)/\abs{X_m}$ exist, then either $\lim_{m\to \infty} \abs{M_m}/\abs{X_m}=0$ or $\lim_{m\to \infty} \FB^1_{X_m}(M_m)/\abs{X_m}=0$.
\end{theorem}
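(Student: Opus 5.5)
The plan is to bound $\FB^1_{X}(M)$ pointwise by a shadow size, then sum over $M$ and invoke Lemma \ref{l:key_lemma_asymptotic}. Write $X=X_m$, $M=M_m$, $n=\abs{X}$.

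\emph{Step 1: a pointwise bound on shadows.} For $\alpha\in X\subset\bqm$ with $X$ decreasing, the map $\beta\mapsto\alpha+\beta$ sends $\nabla^\perp_X(\alpha')$-type sets to shadows, which gives the inequality I need. Concretely, I will show that for every $z\in X$,
\[
\abs{\nabla_X(z)}\cdot\prod_{i=1}^m(z_i+1)\le n .
\]
The map $(\alpha,\beta)\mapsto \alpha+\beta$ from $\{\alpha\in\NN^m:\alpha\preceq z\}\times\{\delta:\ z+\delta\in X\}$ is not obviously injective. Instead I will use coordinatewise counting: the box $\{\alpha\preceq z\}$ has $\prod(z_i+1)$ elements, and it is contained in $X$ by decreasingness. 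The shadow $\nabla_X(z)=z+D$, where $D=\{\delta: z+\delta\in X\}$ is itself decreasing. The sets $\{\alpha+\delta\}$ do not tile $X$ disjointly either.

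I therefore switch to the cleaner quantity $\#\{(\alpha,\beta)\in X^2:\alpha\preceq\beta\}$, which I double count.

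\emph{Step 2: double counting.} Let $P=\#\{(\alpha,\beta)\in X\times X:\alpha\preceq\beta\}$. Summing over $\beta$ and using that $X$ is decreasing, each $\beta$ has exactly $\prod_i(\beta_i+1)$ predecessors in $X$. Hence
\[
P=\sum_{\beta\in X}\prod_{i=1}^m(\beta_i+1)\le n^\gamma
\]
by Lemma \ref{l:key_lemma_asymptotic}, valid for $n\ge2$. Summing over $\alpha$ instead gives $P=\sum_{\alpha\in X}\abs{\nabla_X(\alpha)}$. By definition $\FB^1_X(M)=\min_{\alpha\in M}\abs{\nabla_X(\alpha)}$, so
\[
\abs{M}\,\FB^1_X(M)\le\sum_{\alpha\in M}\abs{\nabla_X(\alpha)}\le P\le n^\gamma .
\]

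\emph{Step 3: conclusion.} Since $q(q+1)/2<q^2$, we have $\gamma<2$. Therefore
\[
\frac{\abs{M_m}\FB^1_{X_m}(M_m)}{\abs{X_m}^2}\le \abs{X_m}^{\gamma-2}\to0
\]
as $\abs{X_m}\to\infty$. For the consequence: if both limits exist, their product is the limit of the above quotient, which is $0$, so one of the two limits must vanish.

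The only real content is the double counting in Step 2 together with Lemma \ref{l:key_lemma_asymptotic}. There is one point I should double-check, namely the count of predecessors. The predecessors of $\beta$ in $\NN^m$ form the box $\prod[0,\beta_i]$, and all of them lie in $X$ because $X$ is decreasing. So the count $\prod_i(\beta_i+1)$ is exact, and I expect no obstacle beyond this. The detour at the start of Step 1 is unnecessary.
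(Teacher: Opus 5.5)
Your Steps 2 and 3 are the paper's argument. You double count comparable pairs $\alpha\preceq\beta$ with $\beta\in X$, bound $\abs{M}\FB^1_X(M)$ by $\sum_{z\in X}\prod_{i}(z_i+1)\le n^\gamma$ via Lemma \ref{l:key_lemma_asymptotic}, and conclude from $\gamma<2$. The paper lets the lower element range over $\bqm$ rather than $X$, which gives the same count since $X$ is decreasing. Delete the abandoned Step 1: its claimed inequality $\abs{\nabla_X(z)}\prod_i(z_i+1)\le n$ is false (take $m=1$, $q=3$, $X=\{0,1,2\}$, $z=1$, which gives $4>3$), though nothing in your argument depends on it.
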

\begin{proof}
Let $n_m:=\abs{X_m}$, and denote $\abs{M_m}=a_m n_m$, $\FB^1_{X_m}(M_m)=b_m n_m$. Consider
$$
S:=\sum_{P\in M_m}\FB^1_{X_m}(P).
$$
We clearly have $S\geq \abs{M_m}\FB^1_{X_m}(M_m)=a_mb_m n_m^2$, and also
\begin{equation}\label{eq:bound_on_S}
S=\sum_{P\in M_m} \sum_{\substack{z \in X_m \\ z \succeq P}} 1\leq \sum_{P\in \bqm} \sum_{\substack{z \in X_m \\ z \succeq P}} 1 = \sum_{z \in X_m} \sum_{\substack{P\in \bqm \\ P \preceq z}} 1 = \sum_{z \in X_m} \prod_{i=1}^m(z_i+1)\leq n_m^\gamma,
\end{equation}
where we have used Lemma \ref{l:key_lemma_asymptotic} for the last inequality. Thus, we have
$$
a_mb_m n_m^2\leq S \leq n_m^\gamma,
$$
which implies 
\begin{equation}\label{eq:n_gamma}
a_mb_m\leq n_m^{\gamma-2}.
\end{equation}
Note that $\gamma<2$, since $q(q+1)/2<q^2$. Thus, the right-hand side of \eqref{eq:n_gamma} tends to 0 when $m\to \infty$, forcing $\lim_{m\to \infty}a_mb_m =0$. 
\end{proof}

\begin{corollary}
Let $\mathbb{L}_m\subset \fq[x_1,\dots,x_m]/I(\X_m)$ be a sequence of linear subspaces, with respect to a sequence of sets of points $\X_m\subset \fq^m$. Let $L_m=\varphi(\ini(\mathbb{L}_m))$ and $X_m=\varphi(\Delta(I(\X_m)))$. Let $k_m:=\abs{L_m}/\abs{X_m}$, $\delta_m:=\FB^1_{X_m}(L_m)/\abs{X_m}$, and assume $\lim_{m\to\infty}\size{X_m}=\infty$. Then it is not possible to have both
$$
\liminf_{m\to \infty }k_m>0 \text{ and }\liminf_{m\to \infty }\delta_m>0.
$$
In other words, the footprint bound cannot be used to guarantee asymptotic goodness. 
\end{corollary}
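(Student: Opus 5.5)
The corollary should follow from Theorem \ref{t:asymptotically_bad_lattices} once we check that its hypotheses hold and replace limits by liminfs. First, $X_m=\varphi(\Delta(I(\X_m)))$ is decreasing, since $\Delta(I(\X_m))$ is the complement of a monomial ideal. Next, $X_m\subset\bqm$: the polynomial $x_i^q-x_i$ lies in $I(\X_m)$ for every $i$. Its leading term is $x_i^q$ for any monomial order, so $x_i^q\in\ini(I(\X_m))$, and no element of $\Delta(I(\X_m))$ has an exponent $\geq q$. Also $L_m\subset X_m$, because $\mathbb{L}_m\subset\text{Span}_\fq(\Delta(I(\X_m)))$ and every initial monomial of an element of $\mathbb{L}_m$ then lies in $\Delta(I(\X_m))$. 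If $\mathbb{L}_m=\{0\}$ then $L_m=\emptyset$ and $k_m=0$. Such indices can be discarded, because we argue by contradiction and assume $\liminf k_m>0$, which forces $L_m\neq\emptyset$ for all large $m$.

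Suppose, for contradiction, that $\liminf_{m} k_m=a>0$ and $\liminf_m \delta_m=b>0$. Then for all sufficiently large $m$ we have $k_m\geq a/2$ and $\delta_m\geq b/2$. Hence
\[
\frac{\abs{L_m}\,\FB^1_{X_m}(L_m)}{\abs{X_m}^2}=k_m\delta_m\geq \frac{ab}{4}>0.
\]
This contradicts the first conclusion of Theorem \ref{t:asymptotically_bad_lattices}, which says the left-hand side tends to $0$ because $\abs{X_m}\to\infty$. We use only the first, unconditional part of the theorem, so there is no need to assume the individual limits exist. Equivalently, one can use \eqref{eq:n_gamma} directly, $k_m\delta_m\leq \abs{X_m}^{\gamma-2}\to 0$ with $\gamma<2$.

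The main subtlety is bookkeeping about indices. Theorem \ref{t:asymptotically_bad_lattices} is stated for sequences indexed by the ambient dimension $m$, with one set for every $m\geq1$. If the sequence in the corollary is indexed differently, or defined only on a subsequence, we apply the theorem's bound \eqref{eq:n_gamma}, which holds for each individual pair $L\subset X\subset\bqm$ with $X$ decreasing and $\abs{X}\geq2$. This gives the same contradiction termwise, so the indexing causes no problem.

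Finally, by \eqref{eq:footprint_bound_combinatorial} we have $d_1(\ev(\mathbb{L}_m))\geq\FB^1_{X_m}(L_m)$ and $\dim\ev(\mathbb{L}_m)=\abs{L_m}$. So the footprint-bound lower estimate on the relative distance, together with the rate, can never certify that a family is asymptotically good. This is the interpretation stated at the end of the corollary.
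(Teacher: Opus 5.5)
Your proposal is correct and follows the paper's own route: the paper proves this corollary in one line, by invoking Theorem \ref{t:asymptotically_bad_lattices} together with Remark \ref{r:evalcodes_to_monomial}. You additionally spell out what the paper leaves implicit: the hypothesis checks ($X_m$ decreasing, $X_m\subset\bqm$ via $x_i^q-x_i\in I(\X_m)$, $L_m\subset X_m$, $L_m\neq\emptyset$ for large $m$), and the step from the product $k_m\delta_m$ tending to $0$ to the impossibility of both liminfs being positive.
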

\begin{proof}
It follows from Theorem \ref{t:asymptotically_bad_lattices}, also considering Remark \ref{r:evalcodes_to_monomial} to reduce to the monomial case. 
\end{proof}

Note that these results do not imply that the family of monomial codes, or even decreasing monomial codes, is bad. 
However, they imply that, to guarantee asymptotic goodness, we need to use other bounds, or improvements of the footprint, such as the one used in \cite{sanjoseGHWNT}. 
We now introduce a new notion closely related to that of asymptotic goodness (see \cite{geilRampSecret}, where this idea was studied in the context of asymptotically good secret sharing schemes).

\begin{definition}
Let $r\geq 1$. We say that a sequence of codes $C_n\subset \fq^n$ is $r$-\textit{asymptotically good} if we have both 
$$
\liminf_{n\to \infty }\frac{\dim(C_n)}{n}>0 \text{ and }\liminf_{n\to \infty }\frac{d_r(C_n)}{n}>0.
$$
For the second quotient to make sense, we must have $\dim(C_n)\geq r$ (we could also define $d_r(C)=0$ if $r>\dim(C)$). This is guaranteed to happen for $n$ sufficiently large if we have $\liminf_{n\to \infty }\frac{\dim(C_n)}{n}>0$.
\end{definition}
By the definition, it is clear that if a sequence of codes is $r$-asymptotically good, it will be $j$-asymptotically good for every $j\geq r$. We can also obtain sequences which are  $r$-asymptotically good but not $r-1$-asymptotically good.

\begin{lemma}
Let $r\geq 2$ and let $\{C_n\}_{n=1}^\infty$ be an asymptotically good sequence of codes. Consider the family defined by $C'_n=C_n$ for $n<r-1$, and $C'_n=C_n+\langle e_1,\dots,e_{r-1}\rangle$ for $n\geq r-1$. The sequence $\{C'_n\}_{n=1}^\infty$ is $r$-asymptotically good, but not $r-1$-asymptotically good.
\end{lemma}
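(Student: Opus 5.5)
We need to prove three things: the rate of $C'_n$ stays positive, $d_r(C'_n)/n$ stays bounded away from zero, and $d_{r-1}(C'_n)$ is tiny. Throughout, $e_i$ denotes the $i$-th standard basis vector of $\fq^n$, and we write $R:=\liminf \dim(C_n)/n>0$ and $\delta:=\liminf d_1(C_n)/n>0$.

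\textbf{Rate.} Since $C_n\subset C'_n$, we have $\dim C'_n\geq \dim C_n$. Hence $\liminf \dim(C'_n)/n\geq R>0$. The same bound also gives $\dim C'_n\geq r$ for $n$ large, so $d_r(C'_n)$ is defined.

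\textbf{$C'_n$ is not $(r-1)$-asymptotically good.} For $n\geq r-1$, the subcode $E:=\langle e_1,\dots,e_{r-1}\rangle\subset C'_n$ has dimension $r-1$ and support $\{1,\dots,r-1\}$. Therefore $d_{r-1}(C'_n)\leq r-1$, and so $d_{r-1}(C'_n)/n\to 0$.

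\textbf{$d_r(C'_n)$ is linear in $n$.} This is the main step. Let $D\subset C'_n$ have dimension $r$, and consider the projection $\pi:\fq^n\to\fq^{n-r+1}$ that deletes the first $r-1$ coordinates.

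1. Since $\ker\pi\cap D\subset E$ and $\dim(D\cap E)\leq r-1$, we get $\dim\pi(D)\geq 1$.
2. Write $C'_n=C_n+E$. Every $c'\in C'_n$ has the form $c+e$ with $c\in C_n$ and $e\in E$, so $\pi(c')=\pi(c)$. Hence $\pi(C'_n)=\pi(C_n)$.
3. Take $x\in D$ with $\pi(x)\neq 0$, and write $\pi(x)=\pi(c)$ with $c\in C_n$. Then $c\neq 0$, because otherwise $\pi(x)=0$. So $\wt(c)\geq d_1(C_n)$.
4. The vector $c$ has at most $r-1$ nonzero entries among the first $r-1$ coordinates, so $\wt(\pi(c))\geq d_1(C_n)-(r-1)$.
5. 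Consequently $\size{\supp(D)}\geq \wt(x)\geq \wt(\pi(x))\geq d_1(C_n)-r+1$.

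Taking the minimum over all such $D$ gives $d_r(C'_n)\geq d_1(C_n)-r+1$. Dividing by $n$, we obtain $\liminf d_r(C'_n)/n\geq \delta>0$.

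The delicate point is step 2, namely replacing an element of $C'_n$ by a genuine codeword of $C_n$ whose projection is nonzero. Everything else is routine.
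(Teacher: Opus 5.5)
Your proof is correct and follows the paper's argument. The paper also uses that an $r$-dimensional subcode cannot lie inside $\langle e_1,\dots,e_{r-1}\rangle$, so it contains a vector that agrees outside the first $r-1$ coordinates with a nonzero codeword of $C_n$; this gives $d_r(C'_n)\geq d_1(C_n)-(r-1)$, while the subcode $\langle e_1,\dots,e_{r-1}\rangle$ forces $d_{r-1}(C'_n)\leq r-1$. Your projection $\pi$ simply spells out the step the paper states in one line.
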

\begin{proof}
Since $\{C_n\}_{n=1}^\infty$ is asymptotically good, for a sufficiently large $n$, we have $d_r(C_m)> d_1(C_m)>r$, and $\dim(C_m)\geq r$, for every $m\geq n$. Consider $D\subset C'_m$, a subcode with $\dim D=r$. By the definition of $C'_m$, we have that $D$ contains $c\in C'_m\setminus \langle e_1,\dots,e_{r-1}\rangle$, and we also have $\wt(c)\geq d_1(C_m)-(r-1)$. Thus, $d_r(C'_m)\geq \abs{\supp(D)}\geq d_1(C_m)-(r-1)$. This proves that $\{C'_n\}_{n=1}^\infty$ is $r$-asymptotically good. That $\{C'_n\}_{n=1}^\infty$ is not $r-1$-asymptotically good follows from the fact that, for sufficiently large $n$, we have $d_{r-1}(C'_n)=r-1$, since $C'_n$ contains the subcode $\langle e_1,\dots,e_{r-1}\rangle$. 
\end{proof}

Even though the footprint bound is not sufficient to prove asymptotic goodness, one might think it could still be sufficient to prove $r$-asymptotic goodness; however, in the next result we prove that this is not the case. 

\begin{corollary}\label{c:fb_r_bad}
Let $r\geq 1$. Under the same hypotheses of Theorem \ref{t:asymptotically_bad_lattices}, and, additionally, $|M_m|\geq r$ for all sufficiently large $m$, we have
$$
\lim_{m\to \infty}\frac{\abs{M_m}\FB^r_{X_m}(M_m)}{\abs{X_m}^2}=0.
$$
As a consequence, the footprint bound cannot be used to guarantee $r$-asymptotic goodness, for any $r\geq 1$. 
\end{corollary}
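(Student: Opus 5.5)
Our plan is to reduce to the case $r=1$ treated in Theorem \ref{t:asymptotically_bad_lattices} by bounding $\FB^r_{X_m}(M_m)$ in terms of $\FB^1$-type quantities. Directly comparing $\FB^r$ with $\FB^1$ fails: $\FB^r$ may exceed $\FB^1$ by a lot, so we cannot simply use $\FB^r\le C\cdot\FB^1$. Instead we adapt the double-counting argument in the proof of Theorem \ref{t:asymptotically_bad_lattices}.

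Fix $m$ with $\abs{M_m}\geq r$. We choose the $r$-subset of $M_m$ greedily to keep the shadow small. Order $M_m$ by $P\mapsto \size{\nabla_{X_m}(P)}$ and let $P_1,\dots,P_{k}$ ($k=\abs{M_m}$) be the elements with $\size{\nabla_{X_m}(P_1)}\le\cdots\le \size{\nabla_{X_m}(P_k)}$. Then
$$
\FB^r_{X_m}(M_m)\le \size{\nabla_{X_m}(\{P_1,\dots,P_r\})}\le \sum_{i=1}^r \size{\nabla_{X_m}(P_i)}\le r\,\size{\nabla_{X_m}(P_r)}.
$$
Moreover, every $P_i$ with $i\ge r$ satisfies $\size{\nabla_{X_m}(P_i)}\ge \size{\nabla_{X_m}(P_r)}$. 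Summing over these $k-r+1$ elements and using \eqref{eq:bound_on_S} from the proof of Theorem \ref{t:asymptotically_bad_lattices} (the sum $S$ over all of $M_m$ is at most $n_m^\gamma$), we get
$$
(k-r+1)\,\size{\nabla_{X_m}(P_r)}\le S\le n_m^\gamma.
$$

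Hence
$$
\abs{M_m}\,\FB^r_{X_m}(M_m)\le \frac{r\,k}{k-r+1}\, n_m^\gamma\le r^2 n_m^\gamma,
$$
using $k/(k-r+1)\le r$ for $k\ge r$ (the function is decreasing in $k$, with value $r$ at $k=r$). Dividing by $n_m^2$ and using $\gamma<2$ together with $n_m\to\infty$ gives
$$
\frac{\abs{M_m}\FB^r_{X_m}(M_m)}{\abs{X_m}^2}\le r^2 n_m^{\gamma-2}\to 0,
$$
since $r$ is fixed.

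For the consequence, apply this through Remark \ref{r:evalcodes_to_monomial}, exactly as in the corollary following Theorem \ref{t:asymptotically_bad_lattices}. Suppose both $\liminf \abs{M_m}/\abs{X_m}>0$ and $\liminf \FB^r_{X_m}(M_m)/\abs{X_m}>0$. Then the product of the two ratios would have positive $\liminf$, contradicting the limit above.

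The only delicate step is the choice of subset. The minimal-shadow element alone controls $\FB^1$, but for $r$ elements we need the $r$-th smallest shadow, which is controlled by averaging over the $k-r+1$ largest shadows. This costs only a constant factor depending on $r$, which the decaying term $n_m^{\gamma-2}$ absorbs.
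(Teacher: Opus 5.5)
Your proof is correct and follows essentially the same route as the paper: pick the $r$ elements of $M_m$ with the smallest individual shadows, bound $\FB^r_{X_m}(M_m)$ by the sum of their shadows, and control that sum through the double-counting estimate $S\le n_m^\gamma$ from \eqref{eq:bound_on_S}. The only difference is that the paper uses directly that the sum of the $r$ smallest shadows is at most $\frac{r}{\abs{M_m}}S$, giving the constant $r$ instead of your $r^2$. Your detour through the $r$-th smallest shadow and the $k-r+1$ largest ones is a slightly lossy version of this averaging, which is immaterial for fixed $r$.
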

\begin{proof}
Let $n_m = \abs{X_m}$ and $\abs{M_m} = a_m n_m$. Let $P_1, \dots, P_r$ be the $r$ elements in $M_m$ with the smallest individual footprints $\FB^1_{X_m}(P_i)$. By the definition of the footprint bound
$$
\begin{aligned}
\FB^{r}_{X_m}(M_m)\leq \FB^{r}_{X_m}(\{P_1,\dots,P_r\})  \leq \left| \bigcup_{i=1}^r \{x \in X_m \mid x \succeq P_i\} \right| &\leq \sum_{i=1}^r \FB^1_{X_m}(P_i)\\
&\leq \frac{r}{\abs{M_m}} \sum_{P \in M_m} \FB^1_{X_m}(P),
\end{aligned}
$$
where in the last step we have used that the sum of the smallest footprints is lower than $r$ times the average footprint. Using \eqref{eq:bound_on_S} we obtain
$$
\FB^{r}_{X_m}(M_m) \leq\frac{r}{\abs{M_m}} \sum_{P \in M_m} \FB^1_{X_m}(P)\leq r \frac{n_m^\gamma}{a_m n_m}.
$$
Multiplying both sides by $a_m / n_m$, we get
$$
\frac{\abs{M_m}\FB^{r}_{X_m}(M_m)}{\abs{X_m}^2} \leq r n_m^{\gamma - 2}.
$$
Since $r$ is fixed and $\gamma < 2$, the right-hand side tends to 0 as $m \to \infty$. 
\end{proof}

We can also consider the relative footprint bound and a sequence of nested sets (or codes), similar to \cite{geilRampSecret}, but we obtain the same negative result.

\begin{corollary}\label{c:fb_r_bad_relative}
Let $r\geq 1$. Let $\{M^1_m\}_{m=1}^\infty$, $\{M^2_m\}_{m=1}^\infty$, $\{X_m\}_{m=1}^\infty$ be three sequences of sets such that $\emptyset \neq M^2_m\subset M^1_m\subset X_m\subset \bqm$ and $X_m$ is decreasing, for $m\geq 1$. 
Assume that $\alpha \in M_m^1\setminus M_m^2$ implies $\alpha \succ\beta$, for any $\beta \in M_m^2$, and that $\lim_{m\to \infty}\abs{X_m}=\infty $. Also assume that $|M_m^1\setminus M_m^2|\geq r$ for all sufficiently large $m$. Then 
$$
\lim_{m\to \infty}\frac{\abs{M^1_m\setminus M^2_m}\FB^r_{X_m}(M^1_m,M^2_m)}{\abs{X_m}^2}=0.
$$
\end{corollary}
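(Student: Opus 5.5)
The plan is to mimic the proof of Corollary \ref{c:fb_r_bad}, with $M_m$ replaced by the difference set $N_m:=M^1_m\setminus M^2_m$. Write $n_m=\abs{X_m}$ and $\abs{N_m}=a_m n_m$. By definition \eqref{eq:fprelative}, the relative bound $\FB^r_{X_m}(M^1_m,M^2_m)$ is the minimum of $\size{\nabla_{X_m}(B)}$ over $B\in\binom{N_m}{r}$. This is literally the same quantity as $\FB^r_{X_m}(N_m)$ computed with $N_m$ in place of $M$. The hypothesis that elements of $N_m$ dominate elements of $M^2_m$ is only needed for \eqref{eq:fprelative} to be a valid bound on the RGHWs. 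It plays no role in the combinatorial estimate, and neither does decreasingness of $N_m$, which Corollary \ref{c:fb_r_bad} never used.

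First, choose $P_1,\dots,P_r\in N_m$ with the $r$ smallest values of $\size{\nabla_{X_m}(P)}=\FB^1_{X_m}(P)$. This is possible because $\abs{N_m}\geq r$ for large $m$. Subadditivity of unions gives
$$
\FB^r_{X_m}(M^1_m,M^2_m)\leq \size{\nabla_{X_m}(\set{P_1,\dots,P_r})}\leq \sum_{i=1}^r \FB^1_{X_m}(P_i)\leq \frac{r}{\abs{N_m}}\sum_{P\in N_m}\FB^1_{X_m}(P).
$$
The last step holds because the average of the $r$ smallest values is at most the average over all of $N_m$.

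Second, bound the sum over $N_m$ by enlarging it to a sum over all $P\in\bqm$, exactly as in \eqref{eq:bound_on_S}. Swapping the order of summation gives $\sum_{z\in X_m}\prod_i(z_i+1)$, and Lemma \ref{l:key_lemma_asymptotic} bounds this by $n_m^\gamma$ whenever $n_m\geq 2$, which holds eventually since $n_m\to\infty$. Combining the two steps yields $\abs{N_m}\,\FB^r_{X_m}(M^1_m,M^2_m)\leq r\,n_m^\gamma$. Dividing by $n_m^2$ gives a quantity at most $r\,n_m^{\gamma-2}$, which tends to $0$ because $\gamma<2$.

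I do not expect a real obstacle. The only point needing care is to identify the relative bound with the ordinary bound on $N_m$, and to note that the estimate \eqref{eq:bound_on_S} is valid for an arbitrary subset of $X_m$ (indeed of $\bqm$), not only for decreasing $M$.
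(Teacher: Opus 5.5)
Your proposal is correct and is essentially the paper's own argument. The paper's proof just says to rerun the proof of Corollary \ref{c:fb_r_bad} with $D_m:=M^1_m\setminus M^2_m$ in place of $M_m$, which is exactly what you spell out. This includes your two correct observations: the relative bound is just $\FB^r_{X_m}$ of the difference set, and the estimate \eqref{eq:bound_on_S} holds for any subset of $\bqm$.
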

\begin{proof}
Let $n_m = \abs{X_m}$ and $\abs{M^1_m\setminus M^2_m} = a_m n_m$. The proof is then analogous to that of Corollary \ref{c:fb_r_bad}, considering $D_m:=M^1_m\setminus M^2_m$ instead of $M_m$, and relative footprints. 
\end{proof}

If we are considering the setting in which we do not have the condition that $\alpha \in M_m^1\setminus M_m^2$ implies $\alpha \succ\beta$, then we have to consider the bound from \eqref{eq:fprelative_ugly}. Although that bound does not have a purely combinatorial description, we may still consider $\varphi(\ini(\mathbb{L}_m^1\setminus \mathbb{L}_m^2))$ instead of $M_m^1\setminus M_m^2$ in the proof of Corollary \ref{c:fb_r_bad_relative}, and an analogous result follows. 

\begin{remark}
Note that the proof of Corollaries \ref{c:fb_r_bad} and \ref{c:fb_r_bad_relative} holds even if $r$ is not a constant, as long as $r=o(n_m^{2-\gamma})$.
\end{remark}

Similar arguments show that the bounds for the RGHWs of the dual codes are also insufficient to obtain asymptotically good codes. Indeed, Lemma \ref{l:key_lemma_asymptotic} is already written appropriately for the dual footprint bound; see also the last equality in \eqref{eq:bound_on_S}.  

\begin{example}
The footprint bound is sharp for codes considered in \cite{sanjose_simplex}, which implies they are asymptotically bad. Similarly, Theorem \ref{t:asymptotically_bad_lattices} recovers that hyperbolic codes are asymptotically bad, see \cite[Thm. 2]{olav_codes_from_order_domains}.
\end{example}

While the value of the footprint bound may depend on the chosen monomial ordering, which determines $X=\varphi(\Delta(I(\X)))$ in terms of $\X$, once we have chosen that ordering, that fixes $X$, and the previous results apply, i.e., it will still satisfy the Wei-type duality and the asymptotic badness. 

\section{Conclusion}
In this paper, we have proven a Wei-type duality for the footprint bound and, more generally, for the Andersen-Geil and the Feng-Rao bounds. We have also obtained that the footprint bound cannot be used to prove that a family of codes is asymptotically good. This motivates the study of modified footprints as in \cite{sanjoseGHWNT,geilbezout,olav_codes_from_order_domains,geil_improvement_feng_rao_primary,geil_affine_variety_klein}, which may be able to bypass this restriction. It would also be interesting to see whether it is possible to prove similar results for generalizations of the footprint bound to projective and weighted projective spaces.


\end{document}